\setlist[enumerate,1]{label={\roman*)}}
\newcommand{\orcid}[1]{}
\theoremstyle{plain}
\newtheorem{theorem}{Theorem}
\newtheorem*{theorem*}{Theorem}
\newtheorem{lemma}[theorem]{Lemma}
\newtheorem*{lemma*}{Lemma}
\newtheorem{proposition}[theorem]{Proposition}
\newtheorem*{proposition*}{Proposition}
\newtheorem{corollary}[theorem]{Corollary}  
\newtheorem*{corollary*}{Corollary}
\theoremstyle{definition}
\newtheorem*{definition*}{Definition}
\newtheorem*{example*}{Example}
\theoremstyle{remark}
\newtheorem*{remarkx*}{Remark}
\newenvironment{remark}
  {\pushQED{\qed}\remarkx}
  {\popQED\endremarkx}
\newtheorem*{conjecture*}{Conjecture}
\newtheorem*{problem*}{Problem}
\newcommand*{\RR}{\mathbb{R}}
\newcommand*{\dd}{d}
\newcommand\restr[2]{{
  \left.\kern-\nulldelimiterspace 
  #1 
  \right|_{#2} 
}}
\renewcommand{\jourvoldelim}{\addcomma\space}
\title{A new perspective on nonholonomic brackets and Hamilton--Jacobi theory}
\author[1,2]{Manuel de León\orcid{0000-0002-8028-2348}}
\author[3]{Manuel Lainz\orcid{0000-0002-2368-5853}}
\author[1]{Asier López-Gordón\orcid{0000-0002-9620-9647}\footnote{Author to whom correspondence should be addressed: \href{mailto:asier.lopez@icmat.es}{asier.lopez@icmat.es}}}
\author[4]{Juan Carlos Marrero\orcid{0000-0002-9620-9647}}
\affil[1]{Instituto de Ciencias Matemáticas (CSIC-UAM-UC3M-UCM) \protect\\
Calle Nicolás Cabrera, 13-15, Campus Cantoblanco, UAM, 28049 Madrid, Spain}
\affil[2]{Real Academia de Ciencias Exactas, Físicas y Naturales\protect\\
Calle Valverde, 22, 28004, Madrid, Spain }
\affil[3]{Departamento de Métodos Cuantitativos, CUNEF,\protect\\
Calle Leonardo Prieto Castro, 2, Ciudad Universitaria, 28040, Madrid, Spain}
\affil[4]{ULL-CSIC Geometría Diferencial y Mecánica Geométrica, Departamento de Matemáticas Estadística e Investigación Operativa and Instituto de Matemáticas y Aplicaciones (IMAULL), University of La Laguna, 
San Cristóbal de La Laguna, Spain}
\date{}
\let\oldemph\emph
\let\emph\textbf
\begin{document}

\maketitle

\begin{abstract}
The nonholonomic dynamics can be described by the so-called nonholonomic bracket on the constrained submanifold, which is a non-integrable modification of the Poisson bracket of the ambient space, in this case, of the canonical bracket on the cotangent bundle of the configuration manifold. This bracket was defined in \cite{Cantrijn1999a,Ibort1999} although there was already some particular and less direct definition. On the other hand, another bracket, also called nonholonomic bracket, was defined using the description of the problem in terms of skew-symmetric algebroids \cite{deLeon2010a,Grabowski2009}. Recently, reviewing two older papers by R.~J.~Eden \cite{Eden1951,Eden1951b}, we have defined a new bracket which we call Eden bracket. In the present paper, we prove that these three brackets coincide. Moreover, the description of the nonholonomic bracket {\sl \`a la} Eden has allowed us to make important advances in the study of Hamilton--Jacobi theory and the quantization of nonholonomic systems. 

\medskip

\noindent \textbf{Keywords:} nonholonomic mechanics, almost Poisson brackets, skew-symmetric algebroids, Hamilton--Jacobi equation

\medskip

\noindent \textbf{MSC 2020 codes:} primary: 
37J60, 
70F25, 
70H20; 
secondary: 
53D17, 
53Z05, 
70G45 

\end{abstract}

\vfill

\tableofcontents

	\maketitle
	
	\section{Introduction}
	
	One of the most important objects in mechanics is the Poisson bracket, which allows us to obtain the evolution of an observable by bracketing it with the Hamiltonian function, or to obtain new conserved quantities of two given ones, using the Jacobi identity satisfied by the bracket. Moreover, the Poisson bracket is fundamental to proceed with the quantization of the system using what Dirac called the analogy principle, also known as the correspondence principle, according to which the Poisson bracket becomes the commutator of the operators associated to the quantized observables.
	
	For a long time, no similar concept existed in the case of nonholonomic mechanical systems, until van der Schaft and Maschke \cite{vanderSchaft1994} introduced a bracket similar to the canonical Poisson bracket, but without the benefit of integrability (see also \cite{Koon1997}). 
	Later, in \cite{Cantrijn1999a,Cantrijn2000} (see also \cite{Ibort1999}), we have developed a geometric and very simple way to define nonholonomic brackets, in the time-dependent as well time-independent cases. Indeed, it is possible to decompose the tangent bundle and the cotangent bundle along the constraint submanifold in two different ways. Both result in that the nonholonomic dynamics can be obtained by projecting the free dynamics. Furthermore, if we evaluate the projections of the Hamiltonian vector fields of two functions on the configuration manifold (after arbitrary extensions to the whole cotangent) by the canonical symplectic form, two non-integrable brackets are obtained. The first decomposition is due to de Le\'on and Mart{\'\i}n de Diego \cite{deLeon1996} and the second one to Bates and Sniatycki \cite{Bates1993}. The advantage of this second decomposition is that it turns out to be symplectic, and it is the one we will use in the present paper. In any case, we proved that both brackets coincide on the submanifold of constraints \cite{Cantrijn1999a}. {Refer to Section 4 in \cite{Cantrijn1999a} for a detailed analysis explaining how this bracket generalizes the one defined by van der Schaft and Maschke in \cite{vanderSchaft1994}.}
	
	On the other hand, by studying the Hamilton--Jacobi equation, we develop a description of nonholonomic mechanics in the setting of skew-symmetric (or almost Lie) algebroids. Note that the ``almost'' is due to the lack of integrability of the distribution determining the constraints, showing the consistency of the description. In \cite{deLeon2010a,Grabowski2009} we defined a new almost Poisson bracket that we also called nonholonomic. So far, although both nonholonomic brackets have been used in these two different contexts as coinciding, no such proof has ever been published. This paper provides this evidence for the first time. 

But the issue does not end there. In 1951, R.~J.~Eden wrote his doctoral thesis on nonholonomic mechanics under the direction of P.A.M. Dirac ({\sl S-Matrix; Nonholonomic Systems}, University of Cambridge, 1951), and the results were collected in two publications \cite{Eden1951,Eden1951b}.
In the first paper, Eden introduced an intriguing $\gamma$ operator that mapped free states to constrained states. With that operator (a kind of tensor of type (1,1) that has the properties of a projector) 
Eden obtained the equations of motion, could calculate brackets of all observables, obtained a simple Hamilton--Jacobi equation, and even used it to construct a quantization of the nonholonomic system.
These two papers by Eden have had little impact despite their relevance. Firstly, because they were written in terms of coordinates that made their understanding difficult, and secondly, because it was not intil the 1980s when 
the study of nonholonomic systems became part of the mainstream of geometric mechanics. 

Recently, we have carefully studied these two papers by Eden, and realized that the operator $\gamma$ is nothing else a projection defined by the orthogonal
decomposition of the cotangent bundle provided by the Riemannian metric given by the kinetic energy. Consequently, we have defined a new bracket that we call Eden bracket, and 
proved that coincides with the previous nonholonomic brackets. 
 We are sure that this new approach to the dynamics of nonholonomic mechanical systems opens
new and relevant lines of research. Furthermore, this paper may be used as a reference for the reader interested in the different bracket formulations of nonholonomic mechanics.

The paper is structured as follows. In Section~\ref{sec:review_mechanics}, we review some elementary notions on Lagrangian and Hamiltonian mechanics within a geometric framework.
In Section~\ref{sec:nonholonomic}, we recall the main aspects of nonholonomic mechanics and present the corresponding dynamics in both Lagrangian and Hamiltonian settings. We also briefly discuss the skew-symmetric algebroid approach.
In Section~\ref{sec:nh_bracket}, we introduce the nonholonomic bracket as defined using the cotangent bundle approach and the symplectic decomposition of its tangent bundle along the constraint submanifold. Additionally, we define the nonholonomic bracket in the skew-symmetric algebroid context.
In Section~\ref{sec:Eden} we introduce the notion of Eden bracket.
The main results of the paper are presented in Section~\ref{sec:comparison}, where we prove that these three almost Poisson brackets coincide.
In Section~\ref{sec:HJ}, we show how the Eden approach is very useful to discuss Hamilton--Jacobi theory for nonholonomic mechanical systems. 
The above results are illustrated with two examples in Section~\ref{sec:example}: the nonholonomic particle and the rolling ball.
Finally, in Section~\ref{sec:conclusions}, we point out some interesting future lines of research opened up by the results of this paper.

We remark that some of the results in this paper (Theorems 1 and 5) were presented preliminarily, without proofs, in the form of a conference paper \cite{deLeon2023}.

\section{Lagrangian and Hamiltonian mechanics: a brief survey}\label{sec:review_mechanics}

\subsection{Lagrangian mechanics}\label{subsec:review_Lagrangian}

Let $L : TQ  \to \mathbb R$ be a Lagrangian function, 
where $Q$ is a configuration $n$-dimensional manifold. Then, $L = L(q^i, \dot{q}^i)$, where
$(q^i)$ are coordinates in $Q$ and $(q^i, \dot{q}^i)$ are the induced bundle coordinates in $TQ$.
We denote by $\tau_Q : TQ \to Q$ the canonical projection such that
$\tau_Q(q^i, \dot{q}^i) = (q^i)$.

We will assume that $L$ is regular, that is, the Hessian matrix
$$
\left( \frac{\partial^2 L}{\partial \dot{q}^i \partial \dot{q}^j} \right)
$$
is non--degenerate.
Using the canonical endomorphism $S$ on $TQ$ locally defined by
$$
S = d q^i \otimes \frac{\partial}{\partial \dot{q}^i}\, ,
$$
one can construct a 1-form $\theta_L$ defined by
$$
\theta_L = S^* (dL)\, ,
$$
and the 2-form
$$
\omega_L = - d\theta_L\, .
$$
Then, $\omega_L$ is symplectic if and only if $L$ is regular.

Consider now the vector bundle isomorphism
\begin{eqnarray*}
&&\flat_L  :  T(TQ) \to T^*(TQ)\\
&&\flat_L (v) = i_v \, \omega_L\, ,
\end{eqnarray*}
and the Hamiltonian vector field
$$
\xi_L = X_{E_L} \, ,
$$
defined by 
$$
\flat_L(\xi_L) = dE_L\, ,
$$
where $E_L = \Delta(L) -L$ is the energy, {and $\Delta$ is the Liouville vector field generating the dilations on the fibers of $TQ$. In bundle coordinates, $\Delta = \dot{q}^i {\partial}/{\partial \dot{q}^i}$}.
The vector field $\xi_L$, called the \emph{Euler--Lagrange vector field}, is locally given by
\begin{equation}
\xi_ L = \dot{q}^i \, \frac{\partial}{\partial q^i} + B^i \, \frac{\partial}{\partial \dot{q}^i}\, ,
\end{equation}
where
\begin{equation}
B^i \, \frac{\partial}{\partial \dot{q}^i}\left(\frac{\partial L}{\partial \dot{q}^j}\right) + \dot{q}^i \, 
\frac{\partial}{\partial q^i}\left(\frac{\partial L}{\partial \dot{q}^j}\right) - \frac{\partial L}{\partial q^j} = 0\,.
\end{equation}
Now, if $(q^i(t), \dot{q}^i(t))$ is an integral curve of $\xi_L$, then it satisfies the 
usual Euler--Lagrange equations
\begin{equation}\label{eleqs}
\dot q^i = \frac{dq^i}{dt}, \qquad  \frac{d}{dt} \left(\frac{\partial L}{\partial \dot{q}^i}\right) - \frac{\partial L}{\partial q^i} = 0\, .
\end{equation}

\subsection{Legendre transformation}\label{subsec:Legendre}

Let us recall that the Legendre transformation $FL : TQ
\to T^*Q$ is a fibred mapping (that is, $\pi_Q \circ FL
= \tau_Q$, where $\pi_Q : T^*Q
\to Q$ denotes the canonical projection of the cotangent bundle of $Q$). Indeed, $FL$ is the fiber derivative of $L$.

In local coordinates, the Legendre transformation is given by
$$
FL (q^i, \dot{q}^i) = (q^i, p_i), \quad p_i = \frac{\partial L}{\partial \dot{q}^i}\, .
$$
Hence, $L$ is regular if and only if $FL$ is a local diffeomorphism.

Along this paper we will assume that $FL$ is in fact a global
diffeomorphism (in other words, $L$ is hyperregular), which is the
case when $L\colon TQ\to \RR$ is a Lagrangian of mechanical type, namely
$$ L(v_q) = \frac{1}{2} \, g_q(v_q, v_q) - V(q)\, , $$
for $v_q \in T_qQ, q \in Q$,
where $g$ is a Riemannian metric on $Q$ and $V: Q \to \mathbb{R}$ is a potential energy.

\subsection{Hamiltonian description}\label{subsec:review_Hamiltonian}

The Hamiltonian counterpart is developed on the cotangent bundle
$T^*Q$ of $Q$. Denote by $\omega_Q = dq^i \wedge dp_i$ the canonical
symplectic form, where $(q^i, p_i)$ are the canonical coordinates on
$T^*Q$.

The Hamiltonian function is just $H = E_L \circ FL^{-1}$ and the
Hamiltonian vector field is the solution of the symplectic equation
$$
i_{X_H} \, \omega_Q = dH\, .
$$
The integral curves $(q^i(t), p_i(t))$ of $X_H$
satisfy the Hamilton equations
\begin{equation}\label{hamiltoneqs}
\begin{array}{lcr}
\dot{q}^i & = & \displaystyle{\frac{\partial H}{\partial p_i}}\, ,\\\\

\dot{p}_i &=& \displaystyle{- \frac{\partial H}{\partial q^i}}\, .
\end{array}
\end{equation}
Since $FL^* \omega_Q = \omega_L$, we deduce that $\xi_L$ and $X_H$
are $FL$-related, and consequently $FL$ transforms the solutions of the
Euler--Lagrange equations~\eqref{eleqs} into the solutions of the Hamilton equations~\eqref{hamiltoneqs}.

On the other hand, we can define a bracket of functions, called the canonical Poisson bracket, 
$$
\{ \, , \,\}_{can} : C^\infty(T^*Q) \times C^\infty (T^*Q) \to C^\infty(T^*Q)\, ,
$$
as follows
$$
\{ F , G \}_{can} = \omega_Q(X_ F, X_G) = X_G(F) = - X_F(G)\, .
$$

The local expression of the Poisson bracket is
$$
\{ F , G \}_{can} = \frac{\partial F}{\partial q^i} \frac{\partial G}{\partial p_i} - \frac{\partial F}{\partial p_i} \frac{\partial G}{\partial q^i}\, .
$$
	
\begin{remark}\label{localPbracket}
	If $(q^i)$ are local coordinates on $Q$, $\{e_i\} = \{e_i = e^j_i \partial/\partial q^j \}$ is a local basis of vector fields on $Q$, and
	{$\{\mu^i=\mu^i_j \dd q^j\}$ is the dual basis of 1-forms (that is, $\mu^i_j e^j_k = \delta^i_k$)}, then we van consider the corresponding local coordinates $(q^i, \pi_i)$ on $T^*Q$. {In fact, if $\alpha_q = \pi_i \mu^i(q)\in T_q^* Q$, then $(q^i, \pi_i)$ will be the local coordinates for $\alpha_q$. Under these considerations, we have that}
	\begin{equation}
	\label{Local-Poisson-bracket}
	\begin{aligned}
		& \{\pi_i, \pi_j\}_{can} = - C^k_{ij} (q) \pi_k\, , \\
		& \{q^i, \pi_j\}_{can} = e^i_j (q)\, , \\
		& \{q^i, q^j\}_{can} = 0\, ,
	\end{aligned}
	\end{equation}
	where
	$$
	[e_i, e_j ] = C^k_{ij} (q) e_k\, .
	$$
	Here $[\, , \, ]$ denotes the Lie bracket of vector fields (see \cite{deLeon2005}).
\end{remark}

The bracket $\{ \, , \,\}_{can} $ is a Poisson bracket, that is, $\{ \, , \,\}_{can} $ is $\mathbb{R}$-bilinear and:
\begin{itemize}
\item It is skew-symmetric: $\{ G , F \}_{can} = - \{ F , G \}_{can}$;

\item It satisfies the Leibniz rule:
$$
\{ F F' , G \}_{can} = F \{ F', G \}_{can} + F'\{ F , G \}_{can};
$$
and
\item It satisfies the Jacobi identity:
$$
\{ F , \{ G, H\}_{can} \}_{can} + \{ G , \{H, F\}_{can} \}_{can} + \{ H , \{F, G\}_{can} \}_{can} = 0 \, .
$$
\end{itemize}

	\bigskip

	Moreover, the Poisson bracket $\{ \, , \, \}_{can}$ may be used to give the evolution of an observable $F \in C^\infty(T^*Q)$,
	$$
	\dot{F}= X_H(F) = \{ F, H\}_{can} \; .
	$$
	
	\begin{remark}\label{dos}
	Given a 1-form $\alpha$ on a manifold $N$, we can define a vertical vector field $\alpha^V$ on
	the cotangent bundle using the formula
	\begin{equation}\label{lift}
	i_ {\alpha^V} \, \omega_N = - (\pi_N)^* \alpha \, .
	\end{equation}
	where $\omega_N$ is the canonical symplectic form  on $T^*N$ and $\pi_N : T^*N \to N$
	is the canonical projection.
	
	If $\alpha = \alpha_i dx^i$ is the local expression of $\alpha$ in coordinates $(x^i)$ on $N$, then
	$$
	\alpha^V =  \alpha_ i \, \frac{\partial}{\partial p_i}
	$$
	in bundle coordinates $(x^i, p_i)$ on $T^*N$. Thus, if $\beta_x \in T_x^*N$, we have that
	$$
	\alpha^V (\beta_x) = \frac{d}{dt}_{|_{t=0}} (\beta_x + t \alpha(x))\, .
	$$
	The vector field $\alpha^V$ is called the vertical lift of $\alpha$ to $T^*N$ (see \cite{deLeon1989a,Yano1973}).
	
	
	\end{remark}

\section{Nonholonomic mechanical systems} \label{sec:nonholonomic}

\subsection{The Lagrangian description} \label{subsec:nh_Lagrangian}
	
	A \emph{nonholonomic mechanical system} is a quadruple $(Q, g, V, D)$ where
	\begin{itemize}
	
	\item $Q$ is the configuration manifold of dimension $n$;
	
	\item $g$ is a Riemannian metric on $Q$;
	
	\item $V$ is a potential energy, $V \in C^\infty(Q)$;
	
	\item $D$ is a non-integrable distribution of rank $k < n$ on $Q$.
	\end{itemize}
	
	As in Subsection~\ref{subsec:Legendre}, the metric $g$ and the potential energy $V$ define a Lagrangian function $	L : TQ \to \mathbb{R}$ of mechanical type by
	$$
	L(v_q) = \frac{1}{2} \, g_q(v_q, v_q) - V(q)\, ,
	$$
	for $v_q \in T_qQ, q \in Q$.
	In bundle coordinates $(q^i, \dot{q}^i)$ we have
	$$
	L(q^i, \dot{q}^i) = \frac{1}{2} \, g_{ij}(q) \, \dot{q}^i  \dot{q}^j - V(q^i)\, .
	$$
	The nonholonomic dynamics is provided by the Lagrangian $L$ subject to the nonholonomic constraints given by $D$, which means that the permitted velocities should belong to $D$.
	The nonholonomic problem is to solve the equations of motion
    \begin{equation}\label{eq:nh_Lagrangian_coords}
    \begin{aligned}
        &\frac{d}{dt} \left(\frac{\partial L}{\partial \dot{q}^i}\right) - \frac{\partial L}{\partial q^i} = \lambda_A \mu^{A}_{ i}(q)\, ,\\
	    & \mu^{A}_{ i} (q) \dot{q}^i = 0\, ,
    \end{aligned}
    \end{equation}
	where $\{\mu^A\}$ is a local basis of $D^{\circ}$ (the annihilator of $D$) such that $\mu^A = \mu^{A}_{ i} \, dq^i$.
	Here $\lambda_A$ are Lagrange multipliers to be determined.
	
	A geometric description of the equations above can be obtained using the symplectic form $\omega_L$ and the vector bundle of 1-forms, $F$, defined by $F = \tau_Q^*(D^{\circ})$. More specifically, equations~\eqref{eq:nh_Lagrangian_coords} are equivalent to
	\begin{equation}
\begin{aligned}
	&& i_X \, \omega_L - dE_L \in \tau_Q^*(D^{\circ})\, , \\
	&& X \in TD\, .
	\end{aligned}
\end{equation}
	These equations have a unique solution, $\xi_{nh}$, which is called the \emph{nonholonomic vector field}.
	
	The Riemannian metric $g$ induces a linear isomorphism
	\begin{align}
	   \flat_g(q)\colon  T_q Q &\to T_q^*Q\\
	   v_q &\mapsto \flat_g(q)(v_q) = i_{v_q} g\, ,
	\end{align}
	and also a vector bundle isomorphism over $Q$
	$$
	\flat_g :TQ \to T^*Q\, ,
	$$
	and an isomorphism of $C^\infty(Q)$-modules
	$$
	\flat_g\colon \mathfrak{X}(Q) \ni X\mapsto  i_X g\in \Omega^{1}(Q)\, .
	$$
	The corresponding inverses of the three morphisms $\flat_g$ will be denoted by $\sharp_g$.
	
	We can define the orthogonal complement, $D^{\perp_g}$, of $D$ with respect to $g$, as follows:
	
	$$
	D_q^{\perp_g} = \{v_q \in T_qQ \; | \; g(v_q, w_q) = 0, \forall\,  w_q \in D \}\, .
	$$
	The set $D^{\perp_g}$ is again a distribution on $Q$, or, if we prefer, a vector sub-bundle of $TQ$ such that we have the Whitney sum
	
	\begin{equation}\label{dec1}
	TQ = D \oplus D^{\perp_g}\, .
	\end{equation}

\subsection{The Hamiltonian description} \label{subsec:nh_Hamiltonian}
	
	We can obtain the Hamiltonian description of the nonholonomic system $(Q, g, V, D)$ using the Legendre transformation
	$$
	FL : TQ \to T^*Q\, ,
	$$
	which in our case coincides with the isomorphism $\flat_g$ associated to the metric $g$.
	Indeed,
	$$
	FL (q^i, \dot{q}^i) = \left(q^i, \frac{\partial L}{\partial \dot{q}^i} \right) = (q^i, p_i=g_{ij}\dot{q}^j)\, .
	$$
	We can thus define the corresponding:
	\begin{itemize}
	
	\item Hamiltonian function $H= E_L \circ (FL)^{-1} : T^*Q \to \mathbb{R}$
	
	\item constraint submanifold 
	$M = FL(D) = \flat_g(D) = (D^{\perp_g})^{\circ}$.
	
	\end{itemize}
	
	Therefore, we obtain a new orthogonal decomposition (or Whitney sum)
	
	\begin{equation}\label{dec2}
	T^*Q = M \oplus D^{\circ}\, ,
	\end{equation}
	since 
	$$
	FL (D^\perp) = \flat_g(D^{\perp_g}) = D^{\circ}\, .
	$$
	This decomposition is orthogonal with respect to the induced metric on tangent covectors, and it is the
	translation of the decomposition \eqref{dec1} to the Hamiltonian side.
	Similarly to the Lagrangian framework, $M$ and $D^{\circ}$ are vector sub-bundles of $\pi_Q : T^*Q \to Q$ over $Q$.
	We have the canonical inclusion
	\begin{equation}
		i_M : M \hookrightarrow T^*Q\, ,
	\end{equation} 
	and the orthogonal projection
	\begin{equation}\label{eq:gamma}
		\gamma : T^*Q \to  M\, .
	\end{equation}

The equations of motion for the nonholonomic system on $T^*Q$ can
now be written as 
\begin{equation}\label{eq:Hamilton_nh_coords}
\begin{aligned}
&\dot q^i=\displaystyle{\frac{\partial H}{\partial p_i}} \vspace{0.3cm}\, ,\\
&\dot p_i=\displaystyle{-\frac{\partial
H}{\partial q^i}- \bar{\lambda}_A \mu^{A}_{ j}g^{ij}}\, .
\end{aligned}
\end{equation}
together with the constraint equations
$$
\mu^{A}_{ i} g^{ij}p_j = 0\, .
$$
Notice that here the $\bar{\lambda}_A$'s are Lagrange multipliers to be determined.

Now the vector bundle of constrained forces generated by the 1-forms $\tau_Q^*(\mu^A)$,
can be translated to the cotangent side and we obtain the vector bundle generated by the 1-forms
$\pi_Q^*(\mu^A)$, say $\pi_Q^*(D^{\circ})$. Therefore, 
the nonholonomic Hamilton equations~\eqref{eq:Hamilton_nh_coords} can
be rewritten in intrinsic form as
\begin{equation}\label{eq:Int-Def-X-nh}
\begin{array}{rcl}
(i_X\omega_Q-dH)_{|M}&\in& \pi_Q^*(D^{\circ})\, , \\
X_{|M} &\in& TM\, .
\end{array}
\end{equation}
These equations have a unique solution, $X_{nh}$, which is called the \emph{nonholonomic vector field}. The vector fields $X_{nh}$ and $\xi_{nh}$ are related by the Legendre transformation restricted to $D$, namely,
\begin{equation} \label{eq:Xnh_Legendre}
	T(FL)_{|D}(\xi_{nh}) = X_{nh} \circ (FL)_{|D} \, .
\end{equation}
	
\subsection{The skew-symmetric algebroid approach} \label{subsec:algebroid}
	
	In \cite{deLeon2010a} (see also \cite{Grabowski2009}) we have developed an approach to nonholonomic mechanics based on the skew-symmetric algebroid setting.
	
	We denote by $i_D : D \to TQ$ the canonical
	inclusion. The canonical projection given by the decomposition $TQ = D \oplus D^\perp$ on $D$
	is denoted by $P : TQ \to D$.

	Then, the vector bundle $(\tau_Q)_{|_D} : D \to Q$ is an skew-symmetric algebroid.
	The anchor map is just the canonical inclusion $i_D : D \to TQ$, and the skew-symmetric bracket $\|\, , \, \|$ on the space of sections $\Gamma(D)$ is given by
	$$
	\|X, Y\| = P([X, Y])\, ,
	$$
	for $X, Y \in \Gamma(D)$.
	Here, $[\, , \, ]$ is the standard Lie bracket of vector fields.
	
	We also have the vector bundle morphisms provided by the adjoint operators:
	
	\begin{eqnarray*}
	 && i_{D}^* : T^*Q \to D^*\, ,\\
	 && P^* : D^* \hookrightarrow  T^*Q\, ,
	\end{eqnarray*}
	where  $D^*$ is the dual vector bundle of $D$.
	
	We define now an almost Poisson bracket on $M$ as follows (see \cite{deLeon2010a}):
	\begin{equation}\label{bracket-D-star}
	\begin{aligned}
		&\{\,  , \, \}_{D^*} : C^\infty(D^*) \times  C^\infty(D^*) \to C^\infty(D^*)\, ,\\
		&\{\phi , \psi\}_{D^*} = \{\phi \circ i_D^*, \psi \circ i_D^*\}_{can} \circ P^*\, .
	\end{aligned}
	\end{equation}
	
	  \begin{remark}\label{remark:PbracketDstar}
        Suppose that $(q^i)$ are local coordinates on $Q$, and that $\{e_i\} = \{e_a, e_A\}$ is a local basis of vector fields on $Q$ such that 
        $\{e_a\}$ (resp.~$\{e_A\}$) is a local basis of $\Gamma(D)$ (resp.~$\Gamma(D^{\perp_g})$) with
        $$
        e_i = e^j_i \frac{\partial}{\partial q^j}\, .
        $$
        Then, we can consider the dual basis $\{\mu^i\} = \{ \mu^a, \mu^A\}$ of 1-forms on $Q$ and the corresponding local coordinates
        $(q^i, v^i) = (q^i, v^a, v^A)$ on $TQ$ and
        $(q^i, p_i) = (q^i, \pi_a, \pi_A)$ on $T^*Q$. It is clear that
        $(q^i, v^a)$ (resp.~$(q^i, y_a = \pi_a \circ P^*$)) are local coordinates on $D$ (resp.~on $D^*$). In addition,
        we have the following simple expressions of 
        $i_D : D \to TQ$, $P : TQ \to D$ and their dual morphisms 
        
	\begin{equation}
	\begin{aligned}\label{eq:localexpression-morphisms}
	i_D &:&(q^i, v^a) \mapsto (q^i, v^a, 0)\, , \\
	i_D^*  &:& (q^i, \pi_a, \pi_A) \mapsto (q^i, \pi_a)\, , \\
	P &:& (q^i, v^a, v^A) \mapsto (q^i, v^a) \, , \\
        P^* &:& (q^i, y_a) \mapsto (q^i, y_a, 0) \, .
	\end{aligned}
	\end{equation}
    Hence, using equations~\eqref{Local-Poisson-bracket} and \eqref{bracket-D-star} we deduce that
	\begin{equation}\label{localexpression-bracket-D-star}
	\begin{aligned}
		& \{y_a, y_b\}_{D^*} = - C^c_{ab}(q) y_c \, ,\\
		& \{q^i, y_a \}_{D^*} = e^i_a(q)\, , \\
		& \{q^i, q^j\}_{D^*} = 0\, ,
	\end{aligned}
	\end{equation}
       where
       $$
       [e_a, e_b] = C^c_{ab} (q) e_c + C^A_{ab} (q) e_A\, .
       $$
        \end{remark}

     The bracket $\{\, , \, \}_{D^*}$ has the same properties as a Poisson bracket, although it may not satisfy the Jacobi identity, that is, $\{\, , \, \}_{D^*}$ is $\mathbb{R}$-bilinear and
         \begin{itemize}
         \item It is skew-symmetric : $\{\psi , \phi\}_{D^*} = - \{\phi, \psi\}_{D^*}$;
    
         \item It satisfies the Leibniz rule in each argument:
         $$
         \{\phi \phi' , \psi\}_{D^*} = \phi \{\phi' , \psi \}_{D^*} + \phi' \{\phi , \psi \}_{D^*}
         $$
         \end{itemize}
         
         However, one may prove that $\{\, , \, \}_{D^*}$ is a Poisson bracket if and only if the distribution $D$ is integrable (see \cite{Garcia-Naranjo2020}).

        Moreover, if
        $$
        FL_{nh} : D \to D^*
        $$
        is the nonholonomic Legendre transformation given by
        $$
        FL_{nh} = i_{D^*} \circ FL \circ i_D\, ,
        $$
        and $Y_{nh}$ is the nonholonomic dynamics in $D^*$,
        $$
        T(FL_{nh}) (\xi_{nh}) = Y_ {nh} \circ FL_{nh} \; ,
        $$ 
        then the bracket  $\{\, , \,\}_{D^*}$ may be used to give the evolution of an observable $\phi \in C^\infty(D^*)$. In fact, if 
        $h : D^* \to \mathbb{R}$ is the constrained Hamiltonian function defined by
        $$
        h = (E_L)_{|D} \circ FL_{nh}^{-1}\, ,
        $$
        we have that
        $$
        \dot{\phi} = Y_{nh}(\phi) = \{\phi, h\}_{D^*}\, , 
        $$
        for each $\phi \in C^\infty(D^*)$.

\section{The nonholonomic bracket}\label{sec:nh_bracket}

	Consider the vector sub-bundle $T^DM$ over $M$ defined by
	$$
	T^DM = \{Z \in TM \; | \; T\pi_Q(Z) \in D \}\, .
	$$
	As we know \cite{Bates1993,Cantrijn1999a}, $T^DM$ is a symplectic vector sub-bundle of the symplectic vector bundle 
	$(T_M(T^*Q), \omega_Q)$, where the restriction of $\omega_Q$ to any fiber of $T_M(T^*Q)$ is also denoted by $\omega_Q$.
	Thus, we have the following symplectic decomposition
	\begin{equation}\label{sym2}
	T_M(T^*Q) = T^DM \oplus (T^DM)^{\perp_{\omega_Q}}\, ,
	\end{equation}
	where $(T^DM)^{\perp_{\omega_Q}}$ denotes the symplectic orthogonal complement of $T^DM$.
	Therefore, we have associated projections
	\begin{equation}\label{eq:projs_P_Q}
	\begin{aligned}
		{\cal P} : T_M(T^*Q) & =  T^DM  \oplus (T^DM)^{\perp_{\omega_Q}} \to T^DM \, ,\\
		{\cal Q} : T_M(T^*Q) & =  T^DM  \oplus (T^DM)^{\perp_{\omega_Q}} \to (T^DM)^{\perp_{\omega_Q}}\, .
	\end{aligned}
	\end{equation}
One of the most relevant applications of the above decomposition is that
$$
X_{nh} = {\cal P} (X_H)
$$
along $M$, {where $X_{nh}$ denotes the nonholonomic vector field defined by \eqref{eq:Int-Def-X-nh}}.

In addition, the above decomposition allows us to define the so-called nonholonomic bracket as follows.
Given $f, g \in C^\infty(M)$, we set

\begin{equation}\label{eq:nhb}
\{f, g\}_{nh} = \omega_Q({\cal P}(X_{\tilde{f}}), {\cal P}(X_{\tilde{g}})) \circ i_M\, ,
\end{equation}
where $i_M : M \to T^*Q$ is the canonical inclusion, and $\tilde{f}, \tilde{g}$ are arbitrary extensions to $T^*Q$ of $f$ and $g$, respectively (see \cite{Cantrijn1999a,Ibort1999}).
Since the decomposition \eqref{sym2} is symplectic, one can equivalently write

\begin{equation}\label{nhb2}
\{f, g\}_{nh} = \omega_Q(X_{\tilde{f}}, {\cal P}(X_{\tilde{g}}) ) \circ i_M \; .
\end{equation}

\begin{remark}
Notice that $f\circ \gamma$ and $g\circ \gamma$ are natural extensions of $f$ and $g$ to $T^*Q$. Hence, we can also define the above nonholonomic bracket as follows
\begin{equation}\label{nhb3}
\{f, g\}_{nh} = \omega_Q(X_{f \circ \gamma}, {\cal P}(X_{g \circ \gamma}) ) \circ i_M \; .
\end{equation}
\end{remark}

The bracket $\{\,, \,\}_{nh}$ is an almost Poisson bracket on $M$. In fact, $\{\,, \,\}_{nh}$ satisfies the Jacobi identity if and only if the distribution $D$ is integrable (see \cite{Ibort1999,vanderSchaft1994}). In addition, 
if $H_M : M \to \mathbb{R}$ is the constrained Hamiltonian function on $M$, namely
$$
H_M = H \circ i_M\, ,
$$
then, using the nonholonomic bracket, we can obtain the evolution of an observable $f \in C^\infty(M)$ as follows
\begin{equation}\label{prima}
\dot{f} = X_{nh}(f) = \{f, H_M\}_{nh} \; .
\end{equation}

\begin{remark}\label{nuevo}
If $x \in M$ and $f \in C^\infty(M)$ then, using equations~\eqref{nhb3} and \eqref{prima}, we deduce that

$$
(X_{nh}(x))(f) = ({\cal P}(X_{H_M \circ \gamma})(x))(f \circ \gamma)\, ,
$$
but as ${\cal P}(X_{H_M \circ \gamma})(x) \in T_xM$ and $f \circ \gamma$ is an extension of $f$ to $T^*Q$, it follows that
$$
X_{nh}(x) = {\cal P}(X_{H_M \circ \gamma})(x)\, .
$$
\end{remark}

{See \cite[Section 4]{Cantrijn1999a} for a thorough comparison between this bracket and the one defined by van der Schaft and Maschke in \cite{vanderSchaft1994}.}

\section{Eden bracket} \label{sec:Eden}
	
	{Using the orthogonal projector $\gamma : T^*Q \to M$ as defined in \eqref{eq:gamma}}, we can define another almost Poisson bracket on $M$ as follows:
	\begin{equation}
	\begin{aligned}\label{eq:Eden_bracket}
	&\{\, , \, \}_E : C^\infty(M) \times  C^\infty(M) \to C^\infty(M)\\
	&\{f , g\}_E = \{f \circ \gamma, g \circ \gamma\}_{can} \circ i_M\, .
	\end{aligned}
	\end{equation}
	
	This bracket will be called \emph{Eden bracket}.
	
	\begin{remark}
	Let $(q^i, \pi_a, \pi_A)$ be local coordinates on $T^*Q$ as in Remark \ref{remark:PbracketDstar}. Then, we have that
	the constrained submanifold $M = (D^{\perp_g})^{\circ}$ is locally described by
	$$
	M = \{ (q^i, \pi_a, \pi_A) \in T^*Q \; | \; \pi_A = 0 \}\, .
	$$
	Thus, $(q^i, \pi_a)$ are local coordinates on $M$ and the expression of the inclusion $i_ M : M \to T^*Q$ is
	$$
	i_M (q^i, \pi_a) = (q^i, \pi_a, 0)\, .
	$$
	Hence, using equations~\eqref{Local-Poisson-bracket} and \eqref{eq:Eden_bracket}, we deduce that Eden bracket is locally characterized by
	\begin{equation}\label{localexpression-E-bracket}
	\begin{aligned}
		& \{\pi_a, \pi_b\}_E = - C^c_{ab}(q) \pi_c\, ,\\
		& \{q^i, \pi_a\}_E = e^i_a (q)\, ,\\
		& \{q^i, q^j\}_E = 0\, .
	\end{aligned}
	\end{equation}
	\end{remark}
	As the bracket $\{\, , \, \}_{D^*}$, the Eden bracket satisfies all the properties
	of a Poisson bracket, with the possible exception of the Jacobi identity.

\section{Comparison of brackets} \label{sec:comparison}
	
	First of all, we shall prove that the almost Poisson brackets defined on $D^*$ and $M$ are isomorphic.

		\begin{theorem}\label{th1}
	The vector bundle isomorphism
	$$
	i_{M, D^*} : M \to D^*
	$$
	over the identity of $Q$, given by the composition
	$$
	i_{M, D^*}  = i_D^* \circ i_M\, ,
	$$
	is an almost Poisson isomorphism between the almost Poisson manifolds
	$(M, \{\, , \, \}_{E})$ and  $(D^*, \{\, , \, \}_{D^*})$ .

	\end{theorem}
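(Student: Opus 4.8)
The plan is to reduce the claim to two elementary identities relating the Hamiltonian-side maps $i_M,\gamma$ with the algebroid-side maps $i_D,P$ and their adjoints, and then to transport the brackets purely formally. First I would check that $i_{M,D^*}$ is genuinely a vector bundle isomorphism over $\Id_Q$: the map $i_D^*\colon T^*Q\to D^*$ is fibrewise the restriction $\alpha\mapsto \alpha|_D$, so $\ker i_D^* = D^{\circ}$; by the Whitney sum \eqref{dec2} we have $M\cap D^{\circ}=\{0\}$, hence $i_{M,D^*}=i_D^*\circ i_M$ is fibrewise injective, and since $M$ and $D^*$ both have rank $k$ it is an isomorphism. (It is clearly a bundle map over $\Id_Q$, as $i_M$ and $i_D^*$ are.)

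The core of the argument is the pair of identities
$$
i_M = P^*\circ i_{M,D^*}\, , \qquad i_D^* = i_{M,D^*}\circ \gamma\, ,
$$
as maps $M\to T^*Q$ and $T^*Q\to D^*$, respectively. To obtain them I would first show that $P^*\circ i_D^* = i_M\circ\gamma$ as endomorphisms of $T^*Q$. Indeed, since $P\circ i_D=\Id_D$ we get $i_D^*\circ P^*=(P\circ i_D)^* = \Id_{D^*}$, so $P^*\circ i_D^*$ is idempotent; moreover $P$ is surjective with $\ker P = D^{\perp_g}$, so $P^*$ is injective with $\Im P^* = (D^{\perp_g})^{\circ}=M$, which gives $\Im(P^*\circ i_D^*)=M$ and $\ker(P^*\circ i_D^*)=\ker i_D^*=D^{\circ}$; a direct check using $\alpha(v)=\alpha(Pv)$ for $\alpha\in M$ shows $(P^*\circ i_D^*)|_M=i_M$. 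Thus $P^*\circ i_D^*$ is the projector of $T^*Q$ onto $M$ along $D^{\circ}$, which, by \eqref{dec2}, is exactly $i_M\circ\gamma$. The two displayed identities then follow by composing with $i_M$ (and using $\gamma\circ i_M=\Id_M$) and with $\gamma$ (and using $i_D^*\circ P^*=\Id_{D^*}$), respectively. As a sanity check, both identities are immediate in the adapted coordinates of Remark~\ref{remark:PbracketDstar}, where \eqref{eq:localexpression-morphisms} turns $i_{M,D^*}$ into $(q^i,\pi_a)\mapsto(q^i,y_a=\pi_a)$.

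With this in hand the computation is short. Given $\phi,\psi\in C^\infty(D^*)$, put $f=\phi\circ i_{M,D^*}$ and $g=\psi\circ i_{M,D^*}$. Using $i_{M,D^*}\circ\gamma=i_D^*$ we get $f\circ\gamma=\phi\circ i_D^*$ and $g\circ\gamma=\psi\circ i_D^*$, so by the definition \eqref{eq:Eden_bracket} of the Eden bracket,
$$
\{f,g\}_E = \{\phi\circ i_D^*,\psi\circ i_D^*\}_{can}\circ i_M\, .
$$
On the other hand, by the definition \eqref{bracket-D-star} of $\{\,,\,\}_{D^*}$ and the identity $P^*\circ i_{M,D^*}=i_M$,
$$
\{\phi,\psi\}_{D^*}\circ i_{M,D^*} = \{\phi\circ i_D^*,\psi\circ i_D^*\}_{can}\circ P^*\circ i_{M,D^*} = \{\phi\circ i_D^*,\psi\circ i_D^*\}_{can}\circ i_M\, .
$$
The two right-hand sides coincide, so $i_{M,D^*}$ intertwines $\{\,,\,\}_E$ and $\{\,,\,\}_{D^*}$; being a diffeomorphism, it is an almost Poisson isomorphism. (Equivalently, one may compare \eqref{localexpression-bracket-D-star} with \eqref{localexpression-E-bracket} directly.)

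I expect the only real obstacle to be the identification $P^*\circ i_D^* = i_M\circ\gamma$: one must be careful that $\Im P^* = (D^{\perp_g})^{\circ}=M$ and $\ker\gamma=D^{\circ}$ hold, i.e. that the tangent-side splitting $TQ=D\oplus D^{\perp_g}$ dualizes correctly to the cotangent-side splitting \eqref{dec2}, so that the two maps are literally the same idempotent on $T^*Q$. Once that is settled, no computation involving the canonical bracket itself is needed; everything else is formal manipulation of compositions.
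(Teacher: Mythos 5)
Your proposal is correct and follows essentially the same route as the paper's proof: everything reduces to the two composition identities $i_{M,D^*}\circ\gamma = i_D^*$ and $i_M = P^*\circ i_{M,D^*}$, after which the bracket transport is the same three-line formal computation using the definitions \eqref{bracket-D-star} and \eqref{eq:Eden_bracket}. The only difference is one of detail: where the paper simply reads these identities off the commutative diagram (and offers the adapted-coordinate check of Remark~\ref{cinco} as an alternative), you verify them by identifying $P^*\circ i_D^*$ with the projector $i_M\circ\gamma$ of $T^*Q$ onto $M$ along $D^{\circ}$, which is a sound and slightly more self-contained justification of the same facts.
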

	
	\begin{proof}
	Using that $M = (D^{\perp_g})^{\circ}$, it is easy to deduce that
	$i_{M, D^*}$ is an isomorphism of vector bundles over the identity of $Q$. Thus, it remains to be seen that
	$$
	\{\phi \circ i_{M, D^*}, \psi \circ i_{M, D^*}\}_E = \{\phi, \psi\}_{D^*} \circ i_{M, D^*}\, ,
	$$
	for all $\phi, \psi \in C^\infty(D^*)$.

	A direct proof comes from the commutativity of the following diagram:
	
	\begin{center}

\begin{tikzcd}
    & T^\ast Q \arrow[ld, "\gamma"'] \arrow[rd, "i_{D^\ast}"] &                                       \\
M \arrow[rr, "i_{M, D^\ast}"] \arrow[rd, "i_M"'] &                                                                   & D^\ast \arrow[ld, "P^\ast"] \\
    & T^\ast Q                                                          &                                      
\end{tikzcd}
\end{center}
{where $\gamma$ is the orthogonal projector induced by the Riemannian metric, as defined in \eqref{eq:gamma}.}
	In fact, given $\phi, \psi \in C^\infty(D^*)$, using equations~\eqref{bracket-D-star} and \eqref{eq:Eden_bracket}, and the following facts
	$$
	i_{M, D^*} \circ \gamma = i_{D^*} \; , \; i_M = P^* \circ i_{M, D^*}  \; ,
	$$ 
	we have
	
	\begin{align}
		\{\phi \circ i_{M, D^*}, \psi \circ i_{M, D^*}\}_E & = \{\phi \circ i_{M, D^*} \circ \gamma, \psi \circ i_{M, D^*} \circ \gamma\}_{can} \circ i_M  \\
		& = \{\phi \circ i_{D}^*, \psi \circ i_{D}^*\}_{can} \circ P^* \circ i_{M, D^*} \\
		& = \{\phi, \psi\}_{D^*} \circ i_{M, D^*}\, .
	\end{align}
	
	\end{proof}
	
	\begin{remark}\label{cinco}
	An alternative proof can be given if we consider adapted bases on $D$ and $D^{\perp_g}$.
	Indeed, the local basis $\{e_i\} = \{e_a, e_A\}$
	of vector fields on $Q$ such that $\{e_a\}$ is a local basis of $D$ and $\{e_A\}$ is a local basis of $D^{\perp_g}$, defines coordinates
	$(q^i, v^i) = (q^i, v^a, v^A)$ on $TQ$ as in Remark \ref{remark:PbracketDstar}.
	Therefore, $(q^i, v^a)$ and $(q^i, v^A)$ define coordinates on $D$ and $D^{\perp_g}$, respectively.
	
	Analogously, we can consider the dual local basis $\{\mu^i\} = \{\mu^a, \mu^A\}$, and the induced coordinates on $D^*$ and $T^*Q$, say $(q^i, y_a)$ and $(q^i, \pi_a, \pi_A)$, respectively, as in Remark \ref{remark:PbracketDstar}.
	
	In addition, $(q^i, \pi_a)$ are local coordinates on $M$ in such a way that the canonical inclusion 
	$i_M : M \to T^*Q$ is given by
	\begin{equation}\label{i-M}
	i_M (q^i, \pi_a) = (q^i, \pi_a, 0)\, .
	\end{equation}
	Therefore, using equations~\eqref{eq:localexpression-morphisms} and \eqref{i-M}, we obtain that the local expression of $i_{M, D^*} : M \to D^*$ is just the identity
	$$
	i_{M, D^*}  : (q^i, \pi_A) \to (q^i, \pi_A)\, ,
	$$
	and Theorem~\ref{th1} immediately follows from equations~\eqref{localexpression-bracket-D-star} and \eqref{localexpression-E-bracket}.
	
	
	\end{remark}
	
	Next, we will prove that the Eden bracket is just the nonholonomic bracket defined in \cite{Cantrijn1999a,Ibort1999}.
	
	\begin{proposition}\label{prop:2}
	{Given the projection ${\cal P}$ defined in \eqref{eq:projs_P_Q}}, we have that
	$$
	{\cal P} (Z) = T\gamma(Z)\, ,
	$$
	for every $Z \in T^D(T^*Q) = \{ Y \in T(T^*Q) \; | \; (T\pi_Q)(Y) \in D\}$,
	{where $\gamma$ is the projection defined in \eqref{eq:gamma}.}
	\end{proposition}
	
	\begin{proof}
	Suppose that $Z \in T_{\beta}^DM$, with $\beta \in T^*_qQ$. Thus, $T\pi_Q(Z) \in D$. Then, we have
	$$
	T\pi_Q(T\gamma(Z)) = T(\pi_Q \circ \gamma)(Z) = T\pi_Q(Z) \in D\, ,
	$$
	which, using that $T\gamma$ takes values in $TM$, implies that $T\gamma(Z) \in T^DM$.
	
	Next, we will prove that 
	$$
	Z-T\gamma (Z) = (\epsilon_q)^V_\beta\, ,
	$$ 
	with $\epsilon_q \in D^{\circ}_q$, 
        where $(\epsilon_q)^V_{\beta_q} \in T_{\beta_q}(T^*Q)$ is just the vertical lift of $\epsilon_q$ to $T_{\beta_q} (T^*Q)$ defined by
	\begin{equation}\label{eq:vertical-lift-point}
		(\epsilon_q)^V_{\beta_q} = \frac{d}{dt}(\beta_q + t \epsilon_q) 
	\end{equation}
	(see Remark \ref{dos}).
	
	Indeed,
	$$
	T\pi_Q(Z-T\gamma(Z)) = T\pi_Q(Z) - T\pi_Q(T\gamma(Z)) = T\pi_Q(Z) - T\pi_Q(Z) = 0\, ,
	$$
	then $Z - T\gamma(Z)$ is a vertical tangent vector, and hence $Z - T\gamma(Z) = (\epsilon_q)^V_{\beta_q}$, for some 1-form $\epsilon_q \in T_q^*Q$, with $q \in Q$.
	
	Let $X : Q \to D$ be a section of the vector sub-bundle $D$, and denote by $\hat{X}$ its associated fiberwise linear function:
	\begin{eqnarray*}
	&&\hat{X} : T^*Q \to \mathbb{R}\, ,\\
	&&\hat{X} (q^i, p_i) = X^ip_i\, ,
	\end{eqnarray*}
	where $X = X^i {\partial}/{\partial q^i}$.
	
	Then, we have
	$$
	(Z - T\gamma(Z)) (\hat{X}) = Z(\hat{X}) - Z(\hat{X} \circ \gamma)\, ,
	$$
	but
	$$
	(\hat{X} \circ \gamma) (q^i, p_i) = \hat{X}(q^i, \gamma^k_ip_k) = X^i \gamma^k_ip_k = \hat{X}(q^i, p_i)\, ,
	$$
	since we are assuming that we are taking tangent vectors at a point $(q^i, p_i) \in M$, which implies that $\gamma^k_ip_k = p_i$.
	
	Therefore,
	$$
	(Z - T\gamma(Z)) (\hat{X}) = 0\, ,
	$$
	or, equivalently,
	$$
	(\epsilon_q)^V_{\beta_q}(\hat{X}) = \epsilon_q(X(q)) = 0\, ,
	$$
	for all $X \in \Gamma (D)$.
	This proves that $\epsilon_q \in D^{\circ}_q$.
	
	Now, we will see that
	\begin{equation}\label{OrtogonalDM}
	(\epsilon_q)^V_{\beta_q} \in (T_{\beta_q}^DM)^{\perp_{\omega_Q}}\,.
	\end{equation}
	Indeed, if $W \in T^D_{\beta_q}M$, then, using standard properties of the canonical symplectic structure $\omega_Q$) (see equation~\eqref{lift}), 
	and the fact that $(T_{\beta_q}\pi_Q)(W) \in D$ and $\epsilon_q \in D^{\circ}$, we deduce that
	$$
	\omega_Q (\beta) ((\epsilon_q)^V_{\beta_q}, W) = - ((T_{\beta_q}^* \pi_Q)(\epsilon_q))(W) = -\epsilon_q ((T_{\beta_q} \pi_Q)(W)) = 0\, .
	$$
	This proves equation~\eqref{OrtogonalDM} and, thus,
	$$
	{\cal P}(Z - T\gamma(Z)) = 0\, .
	$$
	Since $T\gamma(Z) \in T^DM$, we have that ${\cal P}(T\gamma (Z)) = T\gamma(Z))$, which implies that
	$$
	{\cal P}(Z) = T\gamma(Z)\, .
	$$
	\end{proof}
	
	\begin{proposition}\label{prop:3}
	For any function $f \in C^{\infty}(M)$ and $x \in M$, we have
	$$
	(T_x\pi_Q)(X_{f \circ \gamma} (x)) \in D_{q}\, ,
	$$
	with $q = \pi_Q(x)$, {where $\gamma$ is the projection defined in \eqref{eq:gamma}}. In consequence,
	$$
	X_{f \circ \gamma} (x) \in T_x^D(T^*Q) \, ,
	$$
	for any $x\in M$.
	\end{proposition}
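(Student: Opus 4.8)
The plan is to prove the two assertions in turn, the second being an immediate consequence of the first together with the definition $T^D(T^*Q) = \{Y \in T(T^*Q) \mid (T\pi_Q)(Y) \in D\}$. For the first, since $D_q = (D_q^{\circ})^{\circ}$ it suffices to show that every $\alpha_q \in D_q^{\circ}$ annihilates the vector $(T_x\pi_Q)(X_{f\circ\gamma}(x)) \in T_qQ$. Fix such an $\alpha_q$ and extend it to a (local) section $\alpha$ of the vector bundle $D^{\circ} \subset T^*Q$. By the very definition of the pullback of a $1$-form,
$$
\alpha_q\big((T_x\pi_Q)(X_{f\circ\gamma}(x))\big) = \big((\pi_Q^*\alpha)(x)\big)(X_{f\circ\gamma}(x))\, ,
$$
so the task is reduced to evaluating the function $(\pi_Q^*\alpha)(X_{f\circ\gamma})$ at the point $x \in M$.

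Next I would bring in the vertical lift $\alpha^V$ of $\alpha$ to $T^*Q$, which by Remark~\ref{dos} is characterized by $i_{\alpha^V}\omega_Q = -\pi_Q^*\alpha$. Combining this with the defining equation $i_{X_{f\circ\gamma}}\omega_Q = d(f\circ\gamma)$ of the Hamiltonian vector field, one gets
$$
(\pi_Q^*\alpha)(X_{f\circ\gamma}) = -\omega_Q(\alpha^V, X_{f\circ\gamma}) = \omega_Q(X_{f\circ\gamma}, \alpha^V) = d(f\circ\gamma)(\alpha^V) = \alpha^V(f\circ\gamma)\, .
$$
Thus everything reduces to showing that the function $\alpha^V(f\circ\gamma)$ vanishes at $x$ (in fact at every point of $T^*Q$).

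To finish, I would use the pointwise description of the vertical lift, $\alpha^V_{\beta_q} = \frac{d}{dt}\big|_{t=0}(\beta_q + t\alpha_q)$, to write
$$
\alpha^V(f\circ\gamma)(x) = \frac{d}{dt}\Big|_{t=0}\, f\big(\gamma(x + t\alpha_q)\big)\, .
$$
Now $\gamma\colon T^*Q \to M$ is the projection onto the first summand of the orthogonal decomposition $T^*Q = M \oplus D^{\circ}$, hence it is fibrewise linear with $\ker\big(\gamma|_{T_q^*Q}\big) = D_q^{\circ}$; since $\alpha_q \in D_q^{\circ}$, the curve $t \mapsto \gamma(x + t\alpha_q) = \gamma(x) + t\,\gamma(\alpha_q) = \gamma(x)$ is constant, so the derivative above is $0$. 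This proves $(T_x\pi_Q)(X_{f\circ\gamma}(x)) \in D_q$, and then $X_{f\circ\gamma}(x) \in T_x^D(T^*Q)$ follows at once. Equivalently, in the adapted coordinates $(q^i,\pi_a,\pi_A)$ of Remark~\ref{remark:PbracketDstar} one has $\gamma(q^i,\pi_a,\pi_A) = (q^i,\pi_a,0)$, so $f\circ\gamma$ is independent of the variables $\pi_A$, while $(\mu^A)^V = \partial/\partial\pi_A$; then $(\pi_Q^*\mu^A)(X_{f\circ\gamma}) = (\mu^A)^V(f\circ\gamma) = \partial(f\circ\gamma)/\partial\pi_A = 0$, and the $\mu^A$ form a local basis of $D^{\circ}$.

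I do not expect a genuine obstacle here: the argument is essentially bookkeeping. The only points requiring care are matching the sign conventions between the vertical-lift identity $i_{\alpha^V}\omega_Q = -\pi_Q^*\alpha$ and the Hamiltonian equation $i_{X_{f\circ\gamma}}\omega_Q = d(f\circ\gamma)$, and recording explicitly the elementary fact that $\ker\gamma = D^{\circ}$, which is exactly what makes $f\circ\gamma$ locally constant along the $D^{\circ}$-directions.
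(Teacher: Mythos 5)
Your argument is correct and is essentially the proof given in the paper: both pair $(T_x\pi_Q)(X_{f\circ\gamma}(x))$ with an arbitrary covector in $D^\circ$, convert the pairing via the vertical-lift identity $i_{\alpha^V}\omega_Q = -\pi_Q^*\alpha$ and the Hamiltonian equation into the derivative $\frac{d}{dt}\big|_{t=0} f\big(\gamma(x+t\alpha_q)\big)$, and conclude from the fiberwise linearity of $\gamma$ and $\gamma(\alpha_q)=0$ for $\alpha_q\in D_q^\circ$. The coordinate check at the end is a harmless addition; no gaps.
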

	
	\begin{proof}
	Let $\epsilon$ be a section of the vector bundle $ D^{\circ} \to Q$. Then, we have
	
	\begin{align}
		\langle \epsilon(q), (T_x\pi_Q)(X_{f \circ \gamma})(x) \rangle & = \langle (T_x^*\pi_Q) (\epsilon(q)), X_{f \circ \gamma}(x) \rangle \\
		& = - \omega_Q ((\epsilon_q)^V_x, X_{f \circ \gamma})(x) = \langle d(f \circ \gamma), \epsilon^V \rangle (x) \\
		& = \epsilon^V (x) (f \circ \gamma) = \frac{d}{dt}_{|_{t=0}} ((f \circ \gamma) (x + t \epsilon(q)))\\
		& = \frac{d}{dt}_{|_{t=0}}  f (\gamma (x) + t \gamma(\epsilon(q))) = 
		\frac{d}{dt}_{|_{t=0}}(f(\gamma(x)) \\
		& = 0\, ,
	\end{align}
         since $\gamma(\epsilon(q)) =0$.
	\end{proof}
	
		Using Remark \ref{nuevo} and Propositions~\ref{prop:2} and \ref{prop:3}, we conclude that
	\begin{corollary}\label{coro}
    For every $x\in M$, we have 
	$$
	X_{nh}(x) = (T_x\gamma)(X_{{H_M} \circ \gamma}(x))\, .
	$$
	\end{corollary}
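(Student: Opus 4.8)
The plan is to obtain the corollary by simply composing the three results just established, with no further computation. First I would invoke Remark~\ref{nuevo} with the function $H_M$: for $x \in M$ it was shown there that $X_{nh}(x) = \mathcal{P}(X_{H_M \circ \gamma})(x)$, where $\mathcal{P}\colon T_M(T^*Q) \to T^DM$ is the projection coming from the symplectic decomposition~\eqref{sym2}. This step is legitimate because the nonholonomic bracket is independent of the chosen extension, so $H_M \circ \gamma$ may be used as the (distinguished) extension of $H_M$ to $T^*Q$; note also that $\gamma$ restricts to the identity on $M$, so $H_M \circ \gamma$ does restrict to $H_M$ on $M$, and for $x \in M$ the vector $X_{H_M \circ \gamma}(x)$ lies in $T_x(T^*Q)$, i.e. in a fiber of $T_M(T^*Q)$, so that $\mathcal{P}$ applies to it.

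Next I would apply Proposition~\ref{prop:3} with $f = H_M$. This yields $(T_x\pi_Q)(X_{H_M \circ \gamma}(x)) \in D_q$ with $q = \pi_Q(x)$, equivalently $X_{H_M \circ \gamma}(x) \in T_x^D(T^*Q)$. Hence the Hamiltonian vector field of the extension $H_M \circ \gamma$ already takes values in the sub-bundle $T^D(T^*Q)$ along $M$, which is exactly the hypothesis needed to invoke the previous proposition.

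Finally I would apply Proposition~\ref{prop:2} to the vector $Z = X_{H_M \circ \gamma}(x)$: since $Z \in T_x^D(T^*Q)$ and $x \in M$, Proposition~\ref{prop:2} gives $\mathcal{P}(Z) = (T_x\gamma)(Z)$. Chaining the three identities produces
\[
X_{nh}(x) = \mathcal{P}(X_{H_M \circ \gamma})(x) = (T_x\gamma)(X_{H_M \circ \gamma}(x)),
\]
which is the claim.

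There is essentially no obstacle here, as the substantive work is carried entirely by Propositions~\ref{prop:2} and~\ref{prop:3} together with Remark~\ref{nuevo}. The only thing demanding a moment's care is the bookkeeping of base points: one must be sure that $X_{H_M \circ \gamma}(x)$ is a tangent vector at a point of $M$, so that both the projection $\mathcal{P}$ and the hypothesis of Proposition~\ref{prop:2} (which implicitly requires the base point in $M$, since the identity $\gamma^k_i p_k = p_i$ is used there) are meaningful. This is immediate because $x \in M$, and once it is noted the three statements compose directly.
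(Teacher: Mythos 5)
Your argument is correct and is precisely the paper's own route: the corollary is stated there as an immediate consequence of Remark~\ref{nuevo} (giving $X_{nh}(x) = \mathcal{P}(X_{H_M\circ\gamma})(x)$), Proposition~\ref{prop:3} applied to $f = H_M$ (giving $X_{H_M\circ\gamma}(x)\in T_x^D(T^*Q)$), and Proposition~\ref{prop:2} (giving $\mathcal{P} = T\gamma$ on such vectors). Your added care about base points and the fact that $H_M\circ\gamma$ extends $H_M$ is consistent with, and slightly more explicit than, the paper's presentation.
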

	
	This result shows that $T\gamma$ does not project the Hamiltonian dynamics $X_H$ onto the nonholonomic dynamics
	$X_{nh}$.
    However, we can achieve this by modifying the Hamiltonian function using the projector $\gamma$, i.e.~considering considering $X_{H \circ \gamma}$ instead of $X_H$.

	\begin{theorem}
	{The nonholonomic bracket
	$\{\, , \, \}_{nh}$ defined by \eqref{eq:nhb} and the Eden bracket  $\{\, , \, \}_{E}$ defined by \eqref{eq:Eden_bracket} coincide on $M$, namely,
	\begin{equation}
		\{f, g\}_{nh}(x) = \{f, g\}_E(x)\, ,
	\end{equation}
	for any pair of functions $f, g\in C^\infty(M)$ and any point $x\in M$.}
	\end{theorem}

	\begin{proof}
	
	
	Indeed, if $x \in M$ then, using Propositions~\ref{prop:2} and \ref{prop:3}, we have
	\begin{eqnarray*}
	\{f, g\}_{nh}(x) &=& \omega_Q (X_{f \circ \gamma}, {\cal P}(X_{g \circ \gamma}) (x) \\
	&=& \omega_Q (X_{f \circ \gamma}, T\gamma (X_{g \circ \gamma}))(x)\\
	&=& d(f \circ \gamma) (x) (T\gamma(X_{g \circ \gamma})(x))\\
	&=& X_{g \circ \gamma}(x) (f \circ \gamma) \\
	&=& \{f \circ \gamma, g \circ \gamma\}_{can} (x) \\
	&=& \{f, g\}_E (x)\, .
	\end{eqnarray*}
		
	\end{proof}
		
	\begin{remark}
	In his paper, Eden writes the dynamics in terms of the constrained variables that he denotes by 
	$(q^{i*},p_i^*) = (q^i \circ \gamma, p_i \circ \gamma)$. 
	Then, he computes the Poison brackets of the observables substituting the canonical variables $(q^i,p_i)$ by the constrained variables $(q^{i*},p_i^*)$. 
	Indeed, this coincides with computing the Eden brackets of the original observables. This can be seen explicitly in equation (3.4) in \cite{Eden1951b}, 
	where Eden computes the commutation relations of the constrained variables. Indeed, if those are taken as structure constants, they define the Eden bracket.
	
	\end{remark}

\section{Application to the Hamilton--Jacobi theory}\label{sec:HJ}
	
\subsection{Hamilton--Jacobi theory for standard Hamiltonian systems}	
	
Given a Hamiltonian $H = H(q^i, p_i)$, the standard formulation of the Hamilton--Jacobi problem is to find a
function $S(t, q^i)$, called the \emph{principal function}, such that
\begin{equation}
\label{hj1}
\frac{\partial S}{\partial t} + H\left(q^i, \frac{\partial S}{\partial
q^i}\right) = 0\, .
\end{equation}
If we put $S(t, q^i) = W(q^i) - t E$, where $E$ is a constant, then
$W$ satisfies
\begin{equation}
\label{hj2}
H\left(q^i, \frac{\partial W}{\partial q^i}\right) = E\, .
\end{equation}
The function $W$ is called the \emph{characteristic function}.
Equations \eqref{hj1} and \eqref{hj2} are indistinctly referred as
the \emph{Hamilton--Jacobi equation}. See \cite{Abraham2008,Arnold1978} for more details.

Let $Q$ be the configuration manifold, and $T^*Q$ its cotangent
bundle equipped with the canonical symplectic form $\omega_Q$.
Let $H : T^*Q \to \mathbb{R}$ be a Hamiltonian function and $X_H$
the corresponding Hamiltonian vector field (see Subsection~\ref{subsec:review_Hamiltonian}).



Let $\lambda$ be a closed 1-form on $Q$, i.e.~$d\lambda=0$ (then,
locally $\lambda = dW$). We have that

\begin{theorem}\label{tseis}

The following conditions are equivalent:

\begin{enumerate}
\item[(i)] If $\sigma: I\to Q$ satisfies the equation
$$
\frac{dq^i}{dt} = \frac{\partial H}{\partial p_i}\, ,
$$
then $\lambda\circ \sigma$ is a solution of the Hamilton equations;

\item[(ii)] $d (H\circ \lambda)=0$.
\end{enumerate}
\end{theorem}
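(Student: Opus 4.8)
The plan is to reformulate both conditions as statements about a single vector field on $Q$. Introduce $X_H^\lambda := T\pi_Q\circ X_H\circ\lambda \in \mathfrak{X}(Q)$, which in canonical coordinates reads $(X_H^\lambda)^i = (\partial H/\partial p_i)\circ\lambda$. A curve $\sigma$ satisfies the equation in (i) (with $p_i$ understood as $\lambda_i\circ\sigma$) precisely when it is an integral curve of $X_H^\lambda$, and $\lambda\circ\sigma$ solves the Hamilton equations precisely when it is an integral curve of $X_H$. Since $(\lambda\circ\sigma)' = T\lambda\circ X_H^\lambda\circ\sigma$ for such $\sigma$, and since an integral curve of $X_H^\lambda$ passes through every point of $Q$, condition (i) is equivalent to the pointwise identity $X_H\circ\lambda = T\lambda\circ X_H^\lambda$ on all of $Q$. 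So it remains to show that this identity holds if and only if $d(H\circ\lambda)=0$.

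To that end, set $\delta := X_H\circ\lambda - T\lambda\circ X_H^\lambda$, a vector field along $\lambda$. Applying $T\pi_Q$ annihilates both terms (the first by definition of $X_H^\lambda$, the second since $\pi_Q\circ\lambda=\Id_Q$), so $\delta$ is $\pi_Q$-vertical. Now use that $\lambda$ is closed: viewing $\lambda$ as a section $Q\to T^\ast Q$, the canonical symplectic form pulls back to minus its differential, so $\lambda^\ast\omega_Q = -d\lambda = 0$. Combining this with $d(H\circ\lambda) = \lambda^\ast(dH) = \lambda^\ast(i_{X_H}\omega_Q)$ and unwinding the pullback gives, for every $Y\in\mathfrak{X}(Q)$ and $q\in Q$,
$$\langle d(H\circ\lambda), Y\rangle(q) = \omega_Q\bigl(X_H(\lambda(q)), T_q\lambda(Y_q)\bigr) = \omega_Q\bigl(\delta(q), T_q\lambda(Y_q)\bigr),$$
where the last step uses $\omega_Q(T\lambda(X_H^\lambda), T\lambda(Y)) = (\lambda^\ast\omega_Q)(X_H^\lambda, Y) = 0$.

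The conclusion then follows from a linear-algebra observation at each $\lambda(q)$: the vertical subspace of $T_{\lambda(q)}(T^\ast Q)$ is Lagrangian for $\omega_Q$, and $T_q\lambda(T_qQ)$ is a complementary subspace, so a nonzero vertical vector cannot be $\omega_Q$-orthogonal to all of $T_q\lambda(T_qQ)$. Hence the displayed identity shows $\delta(q)=0$ if and only if $d(H\circ\lambda)_q=0$, and ranging over $q\in Q$ yields the equivalence of (i) and (ii). (Alternatively, a short coordinate check works: with $\lambda=\lambda_i\,dq^i$ one gets $\delta = -\bigl(\partial H/\partial q^i + (\partial\lambda_i/\partial q^j)(\partial H/\partial p_j)\bigr)\,\partial/\partial p_i$ along $\lambda$, while $d(H\circ\lambda) = \bigl(\partial H/\partial q^i + (\partial H/\partial p_j)(\partial\lambda_j/\partial q^i)\bigr)\,dq^i$, and these are negatives of one another exactly because $d\lambda=0$ forces $\partial\lambda_i/\partial q^j=\partial\lambda_j/\partial q^i$.) The only genuinely delicate point is the reduction in the first paragraph — that the curve-wise statement (i) is equivalent to a pointwise identity — which relies on invoking existence of integral curves through each point; the rest is the standard manipulation of the canonical symplectic form.
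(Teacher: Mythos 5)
Your argument is correct. Note, however, that the paper itself does not prove Theorem~\ref{tseis}: it is recalled as a standard result (with the references \cite{Abraham2008,Arnold1978,Carinena2006}), and its content is immediately repackaged as the $\lambda$-relatedness statement of Theorem~\ref{thm:HJ}. Your first paragraph supplies precisely the reduction the paper takes for granted, namely that the curve-wise condition (i) is equivalent to $X_H\circ\lambda = T\lambda\circ X_H^\lambda$, and your justification via existence of integral curves of $X_H^\lambda$ through every point is the right one. The second half of your proof is the standard symplectic manipulation, and it is worth pointing out that it mirrors, in the unconstrained setting, exactly the strategy the paper uses to prove the generalized nonholonomic Theorem~\ref{thm:HJ-generalized-nonholonomic}: there the equality $T_q\lambda(X^\lambda_{nh}(q)) = X_{nh}(\lambda(q))$ is tested against a splitting of $T^D_{\lambda(q)}M$ into vertical vectors (Lemmas~\ref{lemma:HJ-generalized-nonholonomic-2} and \ref{lemma:HJ-generalized-nonholonomic-3}, where the vertical directions contribute nothing) and the image $T_q\lambda(D_q)$, where the pullback identity $\lambda^\ast\omega_Q = -d\lambda$ converts the condition into the Hamilton--Jacobi equation; your decomposition of $T_{\lambda(q)}(T^\ast Q)$ into the vertical Lagrangian subspace and the (Lagrangian, since $d\lambda=0$) image of $T_q\lambda$ is the unconstrained specialization of this, with the closedness hypothesis making the $i_{X^\lambda_H}d\lambda$ term disappear. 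The only stylistic difference is that you phrase the nondegeneracy step as ``a vertical vector $\omega_Q$-orthogonal to a complement of the vertical subspace vanishes,'' while the paper's analogous step isolates the vertical test directions in a separate lemma; both are sound, and your coordinate check confirms the computation. No gaps.
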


If $\lambda$ is a closed 1-form on $Q$, one may define a vector field on $Q$:
\begin{equation}\label{eq:X-H-lambda-1}
	X_H^{\lambda}=T\pi_Q \circ X_H\circ \lambda\, .
\end{equation}

The following conditions are equivalent:
\begin{enumerate}
\item[(i)] If $\sigma: I\to Q$ satisfies the equation
$$
\frac{dq^i}{dt} = \frac{\partial H}{\partial p_i}
$$
then $\lambda\circ \sigma$ is a solution of the Hamilton equations;
\item[(i)'] If $\sigma: I\to Q$ is an integral curve of
$X_H^{\lambda}$, then $\lambda\circ \sigma$ is an integral curve of
$X_H$;
\item[(i)''] $X_H$ and $X_H^{\lambda}$ are $\lambda$-related,
i.e.
$$
T\lambda(X_H^{\lambda})=X_H \circ \lambda\, .
$$
\end{enumerate}

Moreover, Theorem \ref{tseis} may be reformulated as follows.

\begin{theorem}\label{thm:HJ}
Let $\lambda$ be
a closed $1$-form on $Q$. Then the following conditions are
equivalent:

\begin{enumerate}
\item[(i)] $X_H^\lambda$ and $X_H$ are $\lambda$-related;

\item[(ii)] $d (H\circ \lambda)=0$\, .
\end{enumerate}

In that case, $\lambda$ is called a \emph{solution of the Hamilton--Jacobi problem} for $H$.
\end{theorem}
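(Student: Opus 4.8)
The plan is to prove the equivalence directly, point by point, by analysing the vertical vector that measures the failure of $X_H$ and $X_H^\lambda$ to be $\lambda$-related. First I would record the two consequences of $\lambda$ being a closed $1$-form, viewed as a section $\lambda\colon Q\to T^*Q$. Writing $\theta_Q$ for the tautological $1$-form on $T^*Q$ (so that $\omega_Q=-d\theta_Q$), one has $\lambda^*\theta_Q=\lambda$, hence $\lambda^*\omega_Q=-d\lambda=0$; equivalently, for every $q$ the image of $T_q\lambda$ is an $n$-dimensional isotropic---thus Lagrangian---subspace of $(T_{\lambda(q)}(T^*Q),\omega_Q)$, and it is transverse to the vertical subspace $\ker T_{\lambda(q)}\pi_Q$ since $T\pi_Q\circ T\lambda=\operatorname{Id}_{TQ}$.

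Next, fix $q$ and set $Z_q:=(X_H\circ\lambda)(q)-(T_q\lambda)\big(X_H^\lambda(q)\big)\in T_{\lambda(q)}(T^*Q)$. Because $X_H^\lambda=T\pi_Q\circ X_H\circ\lambda$ and $T\pi_Q\circ T\lambda=\operatorname{Id}$, both terms project under $T_{\lambda(q)}\pi_Q$ to $X_H^\lambda(q)$, so $Z_q$ is vertical; and condition (i) of the theorem is exactly the statement that $Z_q=0$ for all $q$. To connect this with (ii), contract with an arbitrary $v\in T_qQ$: using $i_{X_H}\omega_Q=dH$ and the chain rule,
\[
d(H\circ\lambda)(v)=dH\big(\lambda(q)\big)\big((T_q\lambda)(v)\big)=\omega_Q\big((X_H\circ\lambda)(q),(T_q\lambda)(v)\big),
\]
and since $\omega_Q\big((T_q\lambda)(X_H^\lambda(q)),(T_q\lambda)(v)\big)=(\lambda^*\omega_Q)(X_H^\lambda(q),v)=0$, this equals $\omega_Q\big(Z_q,(T_q\lambda)(v)\big)$.

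The two implications then follow at once from the identity $d(H\circ\lambda)(v)=\omega_Q(Z_q,(T_q\lambda)(v))$. If (i) holds, then $Z_q=0$, so the right-hand side vanishes for every $v$ and $d(H\circ\lambda)=0$. Conversely, if $d(H\circ\lambda)=0$, then $Z_q$ is $\omega_Q$-orthogonal to $\operatorname{Im}(T_q\lambda)$; this subspace being Lagrangian is its own symplectic orthogonal complement, so $Z_q\in\operatorname{Im}(T_q\lambda)$, while $Z_q$ is vertical and $\operatorname{Im}(T_q\lambda)\cap\ker T_{\lambda(q)}\pi_Q=\{0\}$, whence $Z_q=0$ and (i) holds. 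The single point that genuinely uses the hypothesis $d\lambda=0$---and the step I expect to be the crux---is this Lagrangian/transversality dichotomy; everything else is bookkeeping with $\theta_Q$ and $\pi_Q$. As a sanity check I would also redo the computation in canonical coordinates: writing $\lambda=\lambda_i\,dq^i$, the fibre component of $Z_q$ comes out equal to $-\partial_{q^i}(H\circ\lambda)$ precisely when $\partial\lambda_i/\partial q^j=\partial\lambda_j/\partial q^i$, which again isolates closedness of $\lambda$ as the essential hypothesis.
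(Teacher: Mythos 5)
Your proof is correct. Note, however, that the paper never proves Theorem~\ref{thm:HJ} itself: it is stated as a reformulation of Theorem~\ref{tseis} and subsumed by Theorem~\ref{thm:HJ-generalized}, which is quoted from the literature; the only Hamilton--Jacobi statement actually proved in the paper is the nonholonomic generalization, Theorem~\ref{thm:HJ-generalized-nonholonomic}. Your argument is essentially the unconstrained specialization of that proof's scheme: the discrepancy $Z_q=X_H(\lambda(q))-T_q\lambda\bigl(X_H^\lambda(q)\bigr)$ is vertical, and one tests $i_{Z_q}\omega_Q$ against the complement $\Im(T_q\lambda)$ of the vertical subspace; in the paper that pairing is computed via $\lambda^\ast\theta_Q=\lambda$ and produces the extra term $i_{X^\lambda_{\phantom{n}}}d\lambda$ (compare equations~\eqref{eq:First-formula} and \eqref{eq:second-formula}), which you kill at the outset using $d\lambda=0$. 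Two small observations. First, in the converse direction you can avoid the Lagrangian-subspace argument altogether: writing $Z_q=(\epsilon_q)^V_{\lambda(q)}$ and using equation~\eqref{lift}, one gets $\omega_Q\bigl(Z_q,T_q\lambda(v)\bigr)=-\epsilon_q(v)$, so $d(H\circ\lambda)=0$ forces $\epsilon_q=0$ directly; this lighter variant proves the generalized statement of Theorem~\ref{thm:HJ-generalized} for non-closed $\lambda$ at no extra cost, since without closedness your key identity becomes $d(H\circ\lambda)(v)+d\lambda\bigl(X_H^\lambda(q),v\bigr)=\omega_Q\bigl(Z_q,T_q\lambda(v)\bigr)$. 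Second, your closing remark that closedness is used only in the Lagrangian/transversality dichotomy is slightly imprecise: it is also what lets you discard $\omega_Q\bigl(T_q\lambda(X_H^\lambda(q)),T_q\lambda(v)\bigr)=(\lambda^\ast\omega_Q)\bigl(X_H^\lambda(q),v\bigr)$ in the identity itself, whereas the transversality $\Im(T_q\lambda)\cap\ker T_{\lambda(q)}\pi_Q=\{0\}$ needs no closedness at all. Neither point affects the correctness of what you wrote.
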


If $\lambda = \lambda_i(q) \, dq^i$, 
then $\lambda$ is a solution of the Hamilton--Jacobi problem if and only if
$$
H(q^i, \lambda_i (q^j)) = E\, ,
$$
for some constant $E$,
and we recover the classical Hamilton--Jacobi equation~\eqref{hj2} when
$$
\lambda_i = \frac{\partial W}{\partial q^i}\, .
$$


\begin{remark}\label{remark:mechanical_Hamiltonian}
	Suppose that the Hamiltonian function $H\colon T^\ast Q \to \RR$ is of mechanical type, that is,
	\begin{equation}\label{eq:H-mechanical}
		H(\alpha_q) = \frac{1}{2} g_q^\ast (\alpha_q, \alpha_q) + V(q)\, , 
	\end{equation}
	for $\alpha_q\in T^\ast_q Q$, with $V\in C^\infty(Q)$ and $g_q^\ast$ the scalar product on $T_q^\ast Q$ induced by the Riemannian metric $g$ on $Q$. Then, if $\lambda\in \Omega^1(Q)$ and $f\in C^\infty(Q)$, using equation~\eqref{eq:X-H-lambda-1}, we have that
	\begin{align}
		\langle df(q), X_H^\lambda(q) \rangle &= \langle d(f\circ \pi_Q) (\lambda_q), X_H(\lambda(q)) \rangle \\
		&= - \left(i_{(df(q))_{\lambda(q)}^V} \omega_Q(\lambda(q)) \right) \left(X_H(\lambda(q)\right) 
		= \left(i_{X_H}\omega_Q\right) (\lambda(q)) (df(q))_{\lambda(q)}^V \, ,
	\end{align}
	so, from equations~\eqref{eq:vertical-lift-point} and \eqref{eq:H-mechanical}, we deduce that
	\begin{equation}\label{eq:key-point}
		\langle df(q), X_H^\lambda(q)\rangle = \restr{\frac{d}{dt}}{t=0} H \left(\lambda(q) + t df(q)\right)
		= g_q^\ast (\lambda(q), df(q)) = \langle df(q), \sharp_g(\lambda(q)) \rangle\, .
	\end{equation}
	This implies that 
	\begin{equation}\label{eq:X-H-lambda-2}
		X_H^\lambda(q) = \sharp_g (\lambda(q))\, ,
	\end{equation}
	for any $q\in Q$. 
\end{remark}

One may find in the literature (see Theorem 2 in \cite{Carinena2006}) an extension of Theorem~\ref{thm:HJ} for the more general case in which the 1-form $\lambda$ is not necessarily closed.

\begin{theorem}\label{thm:HJ-generalized}
	Let $\lambda$ be a 1-form on $Q$. Then, the following conditions are equivalent:
	\begin{enumerate}
		\item $X_H^\lambda$ and $X_H$ are $\lambda$-related,
		\item $d(H\circ \lambda) + i_{X_H^\lambda} \dd \lambda = 0$.
	\end{enumerate}
\end{theorem}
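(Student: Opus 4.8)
The plan is to regard $\lambda$ as a section $\lambda\colon Q\to T^\ast Q$ and to analyse the map
$$
\delta \;:=\; X_H\circ\lambda \;-\; T\lambda\circ X_H^\lambda\colon Q\longrightarrow T(T^\ast Q),
$$
which is a vector field along $\lambda$. First I would check that $\delta$ is $\pi_Q$-vertical: by the defining equation~\eqref{eq:X-H-lambda-1} of $X_H^\lambda$ one has $T\pi_Q\circ X_H\circ\lambda = X_H^\lambda$, while $T\pi_Q\circ T\lambda\circ X_H^\lambda = T(\pi_Q\circ\lambda)\circ X_H^\lambda = X_H^\lambda$ since $\pi_Q\circ\lambda=\Id_Q$; hence $(T\pi_Q)(\delta(q))=0$ for every $q\in Q$. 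Because being $\lambda$-related means precisely $X_H\circ\lambda = T\lambda\circ X_H^\lambda$, condition~(i) is therefore equivalent to the identical vanishing of the vertical field $\delta$.

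Next I would detect this vanishing symplectically. For each $q\in Q$ the vertical subspace $V_{\lambda(q)}\subset T_{\lambda(q)}(T^\ast Q)$ is Lagrangian, and it is complementary to $\Image(T_q\lambda)$ because $T\pi_Q\circ T_q\lambda=\Id_{T_qQ}$; thus $T_{\lambda(q)}(T^\ast Q)=V_{\lambda(q)}\oplus\Image(T_q\lambda)$. Since $\delta(q)\in V_{\lambda(q)}$ and $V_{\lambda(q)}$ is isotropic, $\omega_Q(\delta(q),\cdot)$ already annihilates $V_{\lambda(q)}$, so by non-degeneracy of $\omega_Q$ one gets $\delta(q)=0$ if and only if $\omega_Q(\delta(q),T_q\lambda(v))=0$ for all $v\in T_qQ$. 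It remains to compute this pairing. From $i_{X_H}\omega_Q=dH$ and the chain rule, $\omega_Q(X_H(\lambda(q)),T_q\lambda(v)) = \langle dH(\lambda(q)),T_q\lambda(v)\rangle = \langle d(H\circ\lambda)(q),v\rangle$. For the second term, using that the tautological $1$-form $\theta_Q$ on $T^\ast Q$ satisfies $\lambda^\ast\theta_Q=\lambda$ together with $\omega_Q=-d\theta_Q$, hence $\lambda^\ast\omega_Q=-d\lambda$, we obtain $\omega_Q(T_q\lambda(X_H^\lambda(q)),T_q\lambda(v)) = (\lambda^\ast\omega_Q)_q(X_H^\lambda(q),v) = -\,(i_{X_H^\lambda}d\lambda)(q)(v)$. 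Subtracting,
$$
\omega_Q\bigl(\delta(q),T_q\lambda(v)\bigr) = \bigl\langle \bigl(d(H\circ\lambda)+i_{X_H^\lambda}d\lambda\bigr)(q),\,v\bigr\rangle,
$$
so $\delta\equiv 0$ if and only if $d(H\circ\lambda)+i_{X_H^\lambda}d\lambda=0$, i.e.~condition~(ii). Combining with the previous paragraph yields (i)$\Leftrightarrow$(ii).

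Each step is short, so I do not anticipate a genuine obstacle; the two points needing care are the consistent use of sign conventions (the identity $\omega_Q=-d\theta_Q$, hence $\lambda^\ast\omega_Q=-d\lambda$, and $i_{X_H}\omega_Q=dH$) and the small linear-algebra observation that a vector already known to lie in the Lagrangian subspace $V_{\lambda(q)}$ vanishes as soon as it is $\omega_Q$-orthogonal to the complementary subspace $\Image(T_q\lambda)$. As an alternative, the whole argument can be run in Darboux coordinates $(q^i,p_i)$ with $\lambda=\lambda_i(q)\,dq^i$, where $X_H^\lambda = \frac{\partial H}{\partial p_i}(q,\lambda(q))\,\frac{\partial}{\partial q^i}$ and matching the $p_i$-components of $X_H\circ\lambda$ and $T\lambda\circ X_H^\lambda$ gives exactly
$$
\frac{\partial}{\partial q^i}\bigl(H(q,\lambda(q))\bigr) + \left(\frac{\partial\lambda_i}{\partial q^j}-\frac{\partial\lambda_j}{\partial q^i}\right)\frac{\partial H}{\partial p_j}(q,\lambda(q)) = 0,
$$
which is condition~(ii) written out; this is the route to take if one prefers a purely mechanical verification. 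Finally, specializing to $d\lambda=0$ recovers Theorem~\ref{thm:HJ}, a convenient consistency check.
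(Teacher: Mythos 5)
Your argument is correct, and every step checks out with the paper's sign conventions ($\omega_Q=-d\theta_Q$, $i_{X_H}\omega_Q=dH$, $\lambda^*\theta_Q=\lambda$). Note that the paper itself does not prove Theorem~\ref{thm:HJ-generalized}: it quotes it from Theorem 2 of \cite{Carinena2006} and only proves the nonholonomic analogue, Theorem~\ref{thm:HJ-generalized-nonholonomic}. Your route is essentially the unconstrained template of that proof: your splitting $T_{\lambda(q)}(T^*Q)=V_{\lambda(q)}\oplus\Image(T_q\lambda)$ and the observation that the difference vector is vertical (so its pairing against vertical vectors is automatic by isotropy) play exactly the roles of Lemmas~\ref{lemma:HJ-generalized-nonholonomic-2} and \ref{lemma:HJ-generalized-nonholonomic-3} there, while your computations via $i_{X_H}\omega_Q=dH$ and $\lambda^*\omega_Q=-d\lambda$ correspond to equations~\eqref{eq:First-formula} and \eqref{eq:second-formula}; the only structural difference is that in the constrained case the paper must replace ``all of $\Image(T_q\lambda)$'' by $T_q\lambda(D_q)$ and work inside the symplectic subbundle $T^DM$, which is why the conclusion there is only the $\gamma$-projected (equivalently $d^D$) equation rather than the full one you obtain.
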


The equation 
\begin{equation}\label{eq:generalized-HJ}
	d(H\circ \lambda) + i_{X_H^\lambda} \dd \lambda = 0
\end{equation}
will be called \emph{generalized Hamilton--Jacobi equation} for $H\colon T^\ast Q \to \RR$. 

In Subsection~\ref{sec:HJ_algebroid} (see Theorem~\ref{thm:HJ-generalized-nonholonomic}), we will prove a nonholonomic version of Theorem~\ref{thm:HJ-generalized}, which will be useful for our interests.


\subsection{Hamilton--Jacobi theory for nonholonomic mechanical  systems}

Let $H:T^*Q \to \mathbb{R}$ be a mechanical Hamiltonian function
subject to nonholonomic constraints given by a distribution $D$ on $Q$, as in the previous sections. We will continue using the same notations. Hence, 
we have the decomposition
$$
T^*Q = M \oplus D^{\circ} \, .
$$
The vector field $X_{nh} \in \mathfrak{X}(M)$ will denote the corresponding nonholonomic dynamics in the Hamiltonian side.

Let $\lambda$ be a 1-form on $Q$ such that $\lambda(Q)
\subseteq M$. Then, we can define a vector field on $Q$
\begin{equation}\label{eq:Def-Xnh-lambda}
	X_{nh}^{\lambda}=T\restr{(\pi_Q)}{M} \circ X_{nh}\circ \lambda\, .
\end{equation}

\begin{remark}\label{remark:X-nh-lambda}
	If $q\in Q$ and $f\in C^\infty(Q)$ then, from equations~\eqref{lift} and \eqref{eq:Def-Xnh-lambda} and Corollary~\ref{coro}, we deduce that
	\begin{align}
		\langle X_{nh}^\lambda (q), df (q) \rangle 
		& = \left\langle T_{\lambda(q)}\restr{(\pi_Q)}{M}\circ T_{\lambda(q)}\gamma\circ X_{H_M \circ\, \gamma}\circ \lambda(q), df(q)\right\rangle \\
		&= \left\langle X_{H_M \circ\, \gamma} (\lambda(q)), \pi_Q^\ast(df) (\lambda(q)) \right\rangle\\
		& = - \left(i_{(df(q))^V_{\lambda(q)}} \omega_Q(\lambda(q)) \right) \left(X_{H_M \circ\, \gamma} (\lambda(q)) \right)\\
		& = \restr{\frac{d}{dt}}{t=0} \left(H_M \circ\gamma\right)\left(\lambda(q) + t df(q) \right)\, .
	\end{align}
	Now, using that $\lambda(q)\in M$ (which implies that $\gamma \circ \lambda(q)= \lambda(q)$) and the definition of $H$ (see equation~\eqref{eq:H-mechanical}), we obtain that
	\begin{equation}
		\left\langle X_{nh}^\lambda(q), df (q) \right\rangle ) = \restr{\frac{d}{dt}}{t=0}H\left(\lambda(q) + t df(q) \right)\, ,
	\end{equation}
	which, from equation~\eqref{eq:key-point}, implies that
	\begin{equation}
		\left\langle X_{nh}^\lambda(q), df (q) \right\rangle
		= \left\langle X_{H}^\lambda(q), df (q) \right\rangle
		= \left\langle \sharp_g(\lambda(q)), df (q) \right\rangle\, .
	\end{equation}
	Thus, we conclude that 
	\begin{equation}\label{eq:X-nh-lambda}
		X_{nh}^\lambda(q) = \sharp_g(\lambda(q))\, ,
	\end{equation}
	as in the free case (see Remark~\ref{remark:mechanical_Hamiltonian}).
	In particular, since $\lambda(q)\in M_q = (D_q^{\perp_g})^\circ$, we have that
	\begin{equation}\label{eq:X-nh-lambda-D}
		X_{nh}^\lambda(q) \in D_q\, , 
	\end{equation}
	for all $q\in Q$. 
\end{remark}


Moreover, in \cite{Iglesias-Ponte2008} the authors proved the following result.

\begin{theorem}\label{hjprincipal}
Let $\lambda$ be a $1$-form on $Q$ taking values into $M$ and satisfying
$d\lambda \in {\cal I}(D^{\circ})$, where ${\cal I}(D^{\circ})$ denotes the ideal defined by $D^{\circ}$. Then the following conditions are
equivalent:

\begin{enumerate}
\item[(i)] $X_{nh}^\lambda$ and $X_{nh}$ are $\lambda$-related;

\item[(ii)] $d (H\circ \lambda)\in D^{\circ}$
\end{enumerate}

\end{theorem}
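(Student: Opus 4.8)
The plan is to reduce this nonholonomic Hamilton--Jacobi statement to a pointwise computation using the machinery already assembled, in particular Corollary~\ref{coro}, which gives $X_{nh}(x) = (T_x\gamma)(X_{H_M\circ\gamma}(x))$, and Remark~\ref{remark:X-nh-lambda}, which identifies $X_{nh}^\lambda(q) = \sharp_g(\lambda(q))$ and, crucially, tells us $X_{nh}^\lambda(q)\in D_q$. First I would unwind condition (i): since $\lambda$ takes values in $M$, the relatedness $T\lambda(X_{nh}^\lambda) = X_{nh}\circ\lambda$ is an equation between tangent vectors to $M$ along $\lambda(Q)$. Using $X_{nh}\circ\lambda = (T\gamma)\circ X_{H_M\circ\gamma}\circ\lambda$ from Corollary~\ref{coro}, and the fact that $\gamma\circ\lambda = \lambda$ (so $X_{H_M\circ\gamma}\circ\lambda = X_{H\circ\gamma}\circ\lambda$ up to the restriction to $M$), I would rewrite (i) as a statement purely in terms of $X_{H\circ\gamma}$ on $T^*Q$ and the map $\gamma$.

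The key bridge is that the \emph{free} generalized Hamilton--Jacobi theorem, Theorem~\ref{thm:HJ-generalized}, applies to the modified Hamiltonian $\tilde H := H\circ\gamma$ on $T^*Q$. So the second step is: apply Theorem~\ref{thm:HJ-generalized} with Hamiltonian $\tilde H$ and 1-form $\lambda$, obtaining that $X_{\tilde H}^\lambda$ and $X_{\tilde H}$ are $\lambda$-related if and only if $d(\tilde H\circ\lambda) + i_{X_{\tilde H}^\lambda}\,d\lambda = 0$. Now $\tilde H\circ\lambda = H\circ\gamma\circ\lambda = H\circ\lambda$ because $\lambda(Q)\subseteq M$, so the first term is simply $d(H\circ\lambda)$. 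For the second term I would use Remark~\ref{remark:X-nh-lambda}: one checks $X_{\tilde H}^\lambda = X_{nh}^\lambda = \sharp_g(\lambda)$, which lies in $D$. Since $d\lambda \in \mathcal I(D^\circ)$ by hypothesis, every term of $d\lambda$ contains a factor annihilating $D$, hence $i_{X_{\tilde H}^\lambda}\,d\lambda \in D^\circ$ as well. Therefore $d(H\circ\lambda) + i_{X_{\tilde H}^\lambda}\,d\lambda \in D^\circ$ automatically, and the generalized-HJ equation for $\tilde H$ holds modulo $D^\circ$ precisely when $d(H\circ\lambda)\in D^\circ$.

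The final step is to connect the \emph{$\lambda$-relatedness for $\tilde H$ on $T^*Q$} with the \emph{$\lambda$-relatedness for $X_{nh}$ on $M$} claimed in (i). Here one observes that $T\gamma$ intertwines the two dynamics along $M$ in the sense of Corollary~\ref{coro}, and that $\lambda$ factors through $M$; pushing the relatedness equation for $X_{\tilde H}$ forward by $\gamma$ and using $T\pi_Q\circ T\gamma = T\pi_Q$ on $T^DM$ (Proposition~\ref{prop:2}) shows the two conditions are equivalent. One must be a little careful that the projection $\mathcal P$ versus $T\gamma$ distinction does not matter here because $X_{H\circ\gamma}$ already lands in $T^D(T^*Q)$ by Proposition~\ref{prop:3}.

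The main obstacle I anticipate is the bookkeeping in the final step: matching the free-theory relatedness on $T^*Q$ (which a priori involves $X_{\tilde H}$, not its projection) with the nonholonomic relatedness on $M$ (which involves $X_{nh} = T\gamma(X_{\tilde H})$). The subtlety is that $X_{\tilde H}\circ\lambda$ need not equal $X_{nh}\circ\lambda$ as vectors in $T(T^*Q)$ — they differ by the kernel of $T\gamma$, i.e.\ by a vertical vector valued in $(T^DM)^{\perp_{\omega_Q}}$. One has to argue that this discrepancy is killed upon composing with $T\pi_Q$ (so the induced vector fields $X^\lambda$ on $Q$ genuinely agree, as Remark~\ref{remark:X-nh-lambda} already secures) and that it is consistent with the relatedness equations via $T\lambda$, because $T\lambda$ takes values in $TM$ and $T\gamma$ restricted to $TM$ along $\lambda(Q)$ is the identity in the relevant directions. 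Once this compatibility is pinned down, the equivalence (i) $\Leftrightarrow$ (ii) follows from Theorem~\ref{thm:HJ-generalized} applied to $\tilde H = H\circ\gamma$ together with the observation that the extra term $i_{X_{nh}^\lambda}d\lambda$ is absorbed into $D^\circ$.
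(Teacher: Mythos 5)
Your reduction stalls exactly at the point you yourself flag as "bookkeeping", and the problem there is not bookkeeping but the mathematical crux. Applying Theorem~\ref{thm:HJ-generalized} to $\tilde H := H\circ\gamma$ characterizes the $\lambda$-relatedness of $X^\lambda_{\tilde H}$ and $X_{\tilde H}$ by the \emph{exact} equation $d(H\circ\lambda)+i_{X^\lambda_{nh}}d\lambda=0$, and pushing forward by $T\gamma$ (using Corollary~\ref{coro} and $\pi_Q\circ\gamma=\pi_Q$, which does give $X^\lambda_{\tilde H}=X^\lambda_{nh}=\sharp_g(\lambda)$) only yields one implication: relatedness for $\tilde H$ on $T^\ast Q$ implies condition (i). The converse fails: condition (i) reads $T\lambda(X^\lambda_{nh}(q))=T_{\lambda(q)}\gamma\bigl(X_{\tilde H}(\lambda(q))\bigr)$, which pins down $X_{\tilde H}(\lambda(q))$ only up to $\ker T\gamma$, i.e.\ up to the vertical lift of an element of $D^\circ_q$ (see the proof of Proposition~\ref{prop:2}); this discrepancy is the constraint-force term and is in general nonzero, so $X_{\tilde H}$ need not be $\lambda$-related to $X^\lambda_{nh}$ even when (i) holds. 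Correspondingly, the exact equation is strictly stronger than (ii): your (correct) observation that $X^\lambda_{nh}\in D$ and $d\lambda\in\mathcal{I}(D^\circ)$ force $i_{X^\lambda_{nh}}d\lambda\in D^\circ$ shows only that (ii) is equivalent to $d(H\circ\lambda)+i_{X^\lambda_{nh}}d\lambda\in D^\circ$, not to its vanishing. So your argument proves "exact generalized HJ equation $\Rightarrow$ (i) and (ii)", but establishes neither (i)$\Rightarrow$(ii) nor (ii)$\Rightarrow$(i); the silent slide from "$=0$" to "modulo $D^\circ$" in your middle step is precisely where the equivalence is lost, and Theorem~\ref{thm:HJ-generalized} has no "modulo $D^\circ$" version to invoke.

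What is actually needed is a constrained relatedness criterion: since $T\lambda(X^\lambda_{nh}(q))$ and $X_{nh}(\lambda(q))$ both lie in $T^D_{\lambda(q)}M$, their equality is detected by pairing with $\omega_Q$ against $T^D_{\lambda(q)}M=T\lambda(D_q)\oplus V_{\lambda(q)}(\pi_Q|_M)$ only, and one checks the two summands separately. This is exactly the content of Theorem~\ref{thm:HJ-generalized-nonholonomic} together with Lemmas~\ref{lemma:HJ-generalized-nonholonomic-1}--\ref{lemma:HJ-generalized-nonholonomic-3}, from which Theorem~\ref{hjprincipal} follows at once because, under your hypotheses, $\gamma\circ\bigl(i_{X^\lambda_{nh}}d\lambda\bigr)=0$. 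Note also that the paper itself does not argue through Theorem~\ref{thm:HJ-generalized} at all: Theorem~\ref{hjprincipal} is quoted from \cite{Iglesias-Ponte2008}, and the alternative proof sketched in the paper transports the problem through the isomorphism of Theorem~\ref{th1} to the linear almost Poisson / skew-symmetric algebroid setting on $D^\ast$ and invokes Theorem 4.1 of \cite{deLeon2010a}. Either of these routes, or a direct appeal to the decomposition argument of Theorem~\ref{thm:HJ-generalized-nonholonomic}, would close the gap in your proposal.
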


In consequence, the Hamilton--Jacobi equation for the nonholonomic system is
\begin{equation}\label{hj}
d ( H \circ \lambda) \in D^{\circ},
\end{equation}
assuming the additional conditions
\begin{equation}\label{hja}
\begin{aligned}
	& \lambda(Q) \subseteq M\, ,\\
	& d\lambda \in {\cal I}(D^{\circ})\, .
\end{aligned}
\end{equation}
Notice that $d\lambda \in {\cal I}(D^{\circ})$ if and only if 
\begin{equation}\label{hjb}
d\lambda (v_1, v_2) = 0 \, , 
\end{equation}
for all $v_1, v_2 \in D$ (see \cite{deLeon2010a,Ohsawa2009}).


We can improve the results in the above theorem when the distribution $D$ is completely nonholonomic (or bracket-generating),
that is, if $D$ along with all of its iterated Lie brackets $[D, D], [D, [D, D]], \ldots$ spans the tangent bundle $TQ$.

Indeed, using Chow's theorem, one can prove that if $Q$ is a connected differentiable manifold and 
$D$ is completely nonholonomic, then there is no non-zero exact one-form in the annihilator $D^{\circ}$. Therefore,
in this case $d( H \circ \lambda) \in D^{\circ}$ is equivalent to $d( H \circ \lambda) = 0$ (see \cite{deLeon2010a,Ohsawa2009}).


On the other hand, we can give a different proof of Theorem \ref{hjprincipal} using the properties of the the Eden bracket and some general results in \cite{deLeon2010a}. A sketch of this proof is the following one.

Using Theorem \ref{th1} and the fact that the almost Poisson bracket $\{\, , \, \}_{D^*}$ is linear on the vector bundle $D^*$ (see \cite{deLeon2010a}), we directly deduce that the
Eden bracket $\{\, , \, \}_E$ is also linear on the vector subbundle 
$M = (D^{\perp_g})^{\circ} \subseteq T^*Q$. So, $\{\, , \,\}_E$ induces an skew-symmetric algebroid structure
on the dual bundle $M^* = ((D^{\perp_g})^{\circ})^*$ (see Theorem 2.3 in \cite{deLeon2010a}).
Note that $M^*$ may be identified with the vector subbundle $D$. Indeed, the dual isomorphism
$$
i_{M, D^\ast}^* : D \to M^*
$$
to $i_{M,D^*} : M \to D^*$ is just an skew-symmetric algebroid isomorphism when on $D$ we consider the 
skew-symmetric algebroid structure $(\| \, , \, \|, i_D)$ induced by the linear almost Poisson bracket $\{\, , \, \}_{D^*}$. This structure 
$(\| \, , \, \|, i_D)$ was described at the beginning of Subsection~\ref{subsec:algebroid}. Now, using this description, and the general Theorem 4.1 in \cite{deLeon2010a}, we directly deduce Theorem \ref{hjprincipal}.

\subsection{A new formulation of the Hamilton--Jacobi theory for nonholonomic mechanical systems}\label{subsec:HJ_nh_new}

It is really interesting to express the projection $\gamma$ in bundle coordinates.
We can consider a basis $\{e_a\}$ of $\Gamma(D)$ and $\{\mu^A\}$ of $\Gamma(D^{\circ})$ such that
$$
e_a = e^i_a \, \frac{\partial}{\partial q^i} \; , \; \mu^A = \mu^A_i \, dq^i \, .
$$
As in the original papers by R.~Eden \cite{Eden1951,Eden1951b}, we can consider the regular matrix with components
$C_{ab} = g(e_a, e_b)$ and define 
${\cal E}^{kj} = e^k_a C^{ab} e^j_b$, where $C^{ab}$ are the components of the inverse matrix of $(C_{ab})$.
Then a direct computation shows that {the projector $\gamma$ defined in \eqref{eq:gamma} can be written as}
\begin{equation}\label{pes}
\gamma (q^i, p_i) = (q^i, \gamma^j_i p_j)\, ,
\end{equation}
where
$$
\gamma^j_i = g_{ik} {\cal E}^{kj} \, .
$$
Notice that $\gamma$ maps free state phases into constrained state phases, i.e.~points in $T^\ast Q$ into points in $M$. However, $\gamma$ does not map the free dynamics into the nonholonomic dynamics, i.e.~it does not map integral curves of $X_H$ into integral curves of $X_{nh}$. Nevertheless, $\gamma$ maps the free dynamics of a modified Hamiltonian into the nonholonomic dynamics (see Corollary~\ref{coro}).

With the above notations, one can see that equation~\eqref{hjb} can be locally written as
\begin{equation}\label{eq:HJ_ideal_coords}
	\left(\frac{\partial \lambda_l}{\partial q^k} - \frac{\partial \lambda_k}{\partial q^l}\right)e^k_a e^l_b = 0\, ,
\end{equation}
which is trivially satisfied if $\lambda = \lambda_i dq^i$ is closed.

On the other hand, the condition $\lambda(Q) \subseteq M$ can be locally written as
$$
\lambda_i = \gamma^j_i \lambda _j\, .
$$
Therefore, the solutions of the Hamilton--Jacobi equation for the nonholonomic system are 1-forms $\lambda\in \Omega^1(Q)$ satisfying the following conditions:
\begin{equation}\label{eq:HJ_nh}
\begin{aligned}
	&\lambda = \gamma\circ \lambda \, ,\\
	& \restr{\dd \lambda}{D\times D} = 0 \, ,\\
	&\gamma\circ  \dd(H\circ \lambda) = 0 \,,
\end{aligned}
\end{equation}
or, in bundle coordinates,
\begin{equation}\label{eq:HJ_nh_coords}
\begin{aligned}
	&\lambda_i = \gamma^j_i \lambda _j \, ,\\
	&\left(\frac{\partial \lambda_l}{\partial q^k} - \frac{\partial \lambda_k}{\partial q^l}\right) e^k_a e^l_b = 0 \, ,\\
	&\gamma^i_k \left(\frac{\partial H}{\partial q^i} + \frac{\partial H}{\partial p_j} \frac{\partial \lambda_j}{\partial q^i} \right) = 0 \,. 
\end{aligned}
\end{equation}
Observing the above equations, we can notice that if $\lambda$ is a solution for the unconstrained Hamilton--Jacobi problem (and $\lambda$ is assumed to be closed),
then $\lambda$ would be a solution for the nonholonomic Hamilton--Jacobi problem if and only if $\lambda$ takes values in $M$.

\subsection{Generalized nonholonomic Hamilton--Jacobi equation}\label{sec:HJ_algebroid}
In this section, we will proof a nonholonomic version of Theorem~\ref{thm:HJ-generalized}.

Assume that $(Q, g, V, D)$ is a nonholonomic mechanical system, and let $\lambda \in \Omega^1(Q)$ be a 1-form on $Q$ taking values in $M$, namely
$$
\lambda(Q) \subseteq M = (D^{\perp_g})^\circ\, .
$$
As above, we denote by $X_{nh}\in\mathfrak{X}(M)$ the nonholonomic dynamics in the Hamiltonian side, and by $X_{nh}^\lambda$ on $Q$ given by
\begin{equation}\label{eq:X-nh-lambda-r}
	X_{nh}^{\lambda} = \restr{T(\pi_Q)}{M} \circ X_{nh} \circ \lambda\, ,
\end{equation}
so that the following diagram commutes:

\begin{equation}
	\begin{tikzcd}
		M \arrow[r, "X_{nh}"]                              & TM \arrow[d, "\restr{T(\pi_Q)}{M} "] \\
		Q \arrow[r, "X_{nh}^\lambda"] \arrow[u, "\lambda"] & TQ                                    
	\end{tikzcd}
\end{equation}


As we know, 
\begin{equation}\label{eq:X-nh-D-q}
	X_{nh}^\lambda(q) = \sharp_g(\lambda(q)) \in D_q \, ,
\end{equation}
for every $q\in Q$ (see Remark~\ref{remark:X-nh-lambda}).

\begin{theorem} \label{thm:HJ-generalized-nonholonomic}
Let $\lambda\in \Omega^1(Q)$ such that $\gamma \circ \lambda = \lambda$. Then, the vector fields
$X_{nh}^\lambda$ and $X_{nh}$ are $\lambda$-related if and only if
\begin{equation}\label{eq:Gen-NH-HJ-Eq_gamma}
	\gamma \circ \big( \dd (H \circ \lambda) + i_{X_{nh}^\lambda} \dd \lambda \big) = 0\, .
\end{equation}

Equation~\eqref{eq:Gen-NH-HJ-Eq_gamma} will be called the \emph{generalized nonholonomic Hamilton--Jacobi equation}.
\end{theorem}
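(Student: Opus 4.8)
The plan is to re-run the proof of the free generalized Hamilton--Jacobi theorem (Theorem~\ref{thm:HJ-generalized}), substituting $X_{nh}$ for $X_H$ via Corollary~\ref{coro} and carefully tracking the projector $\gamma$. Write $\lambda\colon Q\to T^\ast Q$ also for the $1$-form regarded as a section, so the hypothesis reads $\lambda(Q)\subseteq M$ (equivalently $i_M\circ\gamma\circ\lambda=\lambda$); fix $q\in Q$ and put $x=\lambda(q)\in M$, using throughout the identification $T_xM\subseteq T_x(T^\ast Q)$. First I would note that $X_{nh}$ and $X_{nh}^\lambda$ are $\lambda$-related if and only if
$$
\delta(q):=X_{nh}(\lambda(q))-T_q\lambda\big(X_{nh}^\lambda(q)\big)
$$
vanishes for every $q$. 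Since $X_{nh}^\lambda=\restr{T(\pi_Q)}{M}\circ X_{nh}\circ\lambda$ and $\pi_Q\circ\lambda=\Id_Q$, applying $T\pi_Q$ annihilates $\delta(q)$, so $\delta(q)$ is vertical; as it also lies in $T_xM$ (both summands do), it points along the fibre of $M\to Q$, whence $\delta(q)=(\zeta_q)^V_x$ for a unique $\zeta_q\in M_q=(D_q^{\perp_g})^\circ$. The theorem thus reduces to proving that $\zeta_q=0$ for all $q$ is equivalent to $\gamma\circ\big(\dd(H\circ\lambda)+i_{X_{nh}^\lambda}\dd\lambda\big)=0$.

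The heart of the argument is to compute $\omega_Q\big(\delta(q),T_q\lambda(v)\big)$ for $v\in D_q$ in two ways. By equation~\eqref{lift} and $T\pi_Q(T_q\lambda(v))=v$, on one side $\omega_Q\big((\zeta_q)^V_x,T_q\lambda(v)\big)=-\langle\zeta_q,v\rangle$. On the other side, split $\delta(q)$ into its two summands. The contribution of $T_q\lambda(X_{nh}^\lambda(q))$ is handled by the standard identity $\lambda^\ast\omega_Q=-\dd\lambda$, giving $\omega_Q\big(T_q\lambda(X_{nh}^\lambda(q)),T_q\lambda(v)\big)=-(i_{X_{nh}^\lambda}\dd\lambda)(q)(v)$. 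For the contribution of $X_{nh}(x)$ I would use Corollary~\ref{coro}, $X_{nh}(x)=(T_x\gamma)(Y)$ with $Y:=X_{H_M\circ\gamma}(x)$ the value at $x$ of the Hamiltonian vector field of $H_M\circ\gamma=H\circ i_M\circ\gamma\in C^\infty(T^\ast Q)$ with respect to $\omega_Q$. By Proposition~\ref{prop:3}, $Y\in T_x^D(T^\ast Q)$, so Proposition~\ref{prop:2} gives $X_{nh}(x)=(T_x\gamma)(Y)={\cal P}(Y)$ and the decomposition $Y=X_{nh}(x)+{\cal Q}(Y)$ with ${\cal Q}(Y)\in(T_x^DM)^{\perp_{\omega_Q}}$. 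Since $v\in D_q$, the vector $T_q\lambda(v)$ lies in $T_x^DM$, hence is $\omega_Q$-orthogonal to ${\cal Q}(Y)$; therefore, using $i_{X_{H_M\circ\gamma}}\omega_Q=\dd(H_M\circ\gamma)$ and $H_M\circ\gamma\circ\lambda=H\circ\lambda$,
$$
\omega_Q\big(X_{nh}(x),T_q\lambda(v)\big)=\omega_Q\big(Y,T_q\lambda(v)\big)=\langle\dd(H\circ\lambda)(q),v\rangle .
$$
Putting the three identities together yields $-\langle\zeta_q,v\rangle=\big\langle\dd(H\circ\lambda)(q)+i_{X_{nh}^\lambda}\dd\lambda(q),\,v\big\rangle$ for every $v\in D_q$.

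To conclude, this identity says exactly that $\zeta_q+\big(\dd(H\circ\lambda)+i_{X_{nh}^\lambda}\dd\lambda\big)(q)\in D_q^\circ$. Since $\gamma$ has kernel $D^\circ$ and restricts to the identity on $M$, applying $\gamma$ gives $\gamma\big(\dd(H\circ\lambda)+i_{X_{nh}^\lambda}\dd\lambda\big)(q)=-\zeta_q$ for all $q$; hence $\zeta\equiv 0$ — that is, $X_{nh}$ and $X_{nh}^\lambda$ are $\lambda$-related — precisely when equation~\eqref{eq:Gen-NH-HJ-Eq_gamma} holds, which is the assertion of the theorem.

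I expect the obstacle to be organizational rather than conceptual. In particular, one must resist simply applying Theorem~\ref{thm:HJ-generalized} to the modified Hamiltonian $H\circ i_M\circ\gamma$: the $\lambda$-relatedness of its free flow is strictly stronger than that of $X_{nh}$, as it would force the whole $1$-form $\dd(H\circ\lambda)+i_{X_{nh}^\lambda}\dd\lambda$ to vanish rather than merely its $\gamma$-projection. The device that makes everything fit is to test $\delta(q)$ only against the vectors $T_q\lambda(v)$ with $v\in D_q$ — exactly enough to detect whether $\dd(H\circ\lambda)(q)+i_{X_{nh}^\lambda}\dd\lambda(q)$ lies in $D_q^\circ$ — because on precisely those test vectors the ${\cal Q}$-component of $Y$ is killed by the symplectic orthogonality in the decomposition~\eqref{sym2}.
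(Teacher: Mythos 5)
Your argument is correct, and while it rests on the same two key computations as the paper's proof, it is organized differently. The paper fixes $q$, observes that both $T_q\lambda(X_{nh}^\lambda(q))$ and $X_{nh}(\lambda(q))$ lie in $T^D_{\lambda(q)}M$ (Lemma~\ref{lemma:HJ-generalized-nonholonomic-1}), splits $T^D_{\lambda(q)}M=(T_q\lambda)(D_q)\oplus V_{\lambda(q)}(\restr{\pi_Q}{M})$ (Lemma~\ref{lemma:HJ-generalized-nonholonomic-2}), shows the two $\omega_Q$-contractions automatically agree on the vertical summand (Lemma~\ref{lemma:HJ-generalized-nonholonomic-3}), and then invokes the symplectic decomposition \eqref{sym2} to reduce $\lambda$-relatedness to agreement on the test vectors $T_q\lambda(u_q)$, $u_q\in D_q$; the two pairings are then evaluated exactly as you evaluate them, via $\lambda^\ast\theta_Q=\lambda$ for one and, for the other, directly from the defining equation~\eqref{eq:Int-Def-X-nh} of $X_{nh}$. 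You instead work with the defect $\delta(q)=X_{nh}(\lambda(q))-T_q\lambda(X_{nh}^\lambda(q))$, identify it as a vertical lift $(\zeta_q)^V_{\lambda(q)}$ with $\zeta_q\in M_q$, and derive the explicit identity $\zeta_q=-\gamma\big(\dd(H\circ\lambda)+i_{X_{nh}^\lambda}\dd\lambda\big)(q)$, from which the equivalence is immediate. This buys two things: it bypasses Lemma~\ref{lemma:HJ-generalized-nonholonomic-3} and any appeal to the nondegeneracy of $\omega_Q$ on $T^DM$, and it gives slightly more than the stated theorem, namely a formula measuring the failure of $\lambda$-relatedness. On the other hand, where the paper evaluates $i_{X_{nh}}\omega_Q$ on $T_q\lambda(u_q)$ in one line from \eqref{eq:Int-Def-X-nh}, you route through Corollary~\ref{coro}, Propositions~\ref{prop:2} and \ref{prop:3} and the projector ${\cal P}$, using that ${\cal Q}\big(X_{H_M\circ\gamma}(\lambda(q))\big)$ is $\omega_Q$-orthogonal to $T_q\lambda(D_q)\subseteq T^D_{\lambda(q)}M$; this is correct but longer than necessary. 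Your closing remark about why one cannot simply apply Theorem~\ref{thm:HJ-generalized} to $H\circ i_M\circ\gamma$ is a fair observation and consistent with the paper's comment following Corollary~\ref{coro}.
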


\begin{remark}\label{remark:gen_HJ_alternative}
Note that if $\alpha\colon Q\to TQ$ is a 1-form on $Q$ then it is easy to prove that
\begin{flalign}
	& \alpha(Q)\subseteq M \Leftrightarrow \gamma \circ \alpha = \alpha\,, \\
	& \gamma \circ \alpha = 0 \Leftrightarrow \alpha(v_q)=0\ \forall\, v_q\in D_q \text{ and } \forall\, q\in Q\, .
\end{flalign}
Thus, since $X_{nh}^\lambda(q)\in D_q$ for every $q\in Q$, we deduce that the generalized nonholonomic Hamilton--Jacobi equation \eqref{eq:Gen-NH-HJ-Eq_gamma} may be equivalently written as
\begin{equation}\label{eq:Gen-NH-HJ-Eq}
	d^D (H \circ \lambda) + i_{X_{nh}^\lambda} \, d^D \lambda = 0\, ,
\end{equation}
where $d^D$ is the pseudo-differential of the skew-symmetric algebroid $D$.	

We also remark the following facts, on results related with Theorem~\ref{thm:HJ-generalized-nonholonomic}, that one may find in the literature:
\begin{itemize}
	\item In \cite{Carinena2010}, the authors obtain a similar result but in the Lagrangian formulation.
	\item In \cite{Balseiro2010}, the authors discuss the Hamilton--Jacobi equation for nonholonomic mechanical systems subjected to affine nonholonomic constraints but in the skew-symmetric algebroid settting. The appearance of the Hamilton--Jacobi equation in \cite{Balseiro2010} is similar to equation~\eqref{eq:Gen-NH-HJ-Eq}, but the relevant space in \cite{Balseiro2010} is the affine dual of the constraint affine subbundle (which is different from our constraint vector subbundle $M$).
\end{itemize}

\end{remark}

In order to prove Theorem~\ref{thm:HJ-generalized-nonholonomic}, we will need the following lemmas.

\begin{lemma}\label{lemma:HJ-generalized-nonholonomic-1}
For every $q \in Q$, we have
$$
(T_q \lambda)(X_{nh}^\lambda(q)) \in T^D_{\lambda(q)} M
$$
\end{lemma}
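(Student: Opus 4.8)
The goal is to show that $(T_q\lambda)(X_{nh}^\lambda(q))$ lies in $T^D_{\lambda(q)}M$, i.e.\ that it is tangent to $M$ and projects under $T\pi_Q$ into $D_q$. The plan is to verify the two defining conditions of $T^DM$ separately. For the first, since $\lambda(Q)\subseteq M$ by hypothesis, the map $\lambda\colon Q\to T^*Q$ actually takes values in $M$, so $T\lambda$ takes values in $TM$; hence $(T_q\lambda)(X_{nh}^\lambda(q))\in T_{\lambda(q)}M$ automatically.

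For the second condition, I would compute $T\pi_Q$ applied to this vector. Using that $\pi_Q\circ\lambda = \Id_Q$, we get $(T_{\lambda(q)}\pi_Q)\big((T_q\lambda)(X_{nh}^\lambda(q))\big) = T_q(\pi_Q\circ\lambda)(X_{nh}^\lambda(q)) = X_{nh}^\lambda(q)$. Then I would invoke equation~\eqref{eq:X-nh-D-q} (equivalently Remark~\ref{remark:X-nh-lambda}), which tells us precisely that $X_{nh}^\lambda(q) = \sharp_g(\lambda(q)) \in D_q$. Combining the two observations, $(T_q\lambda)(X_{nh}^\lambda(q))$ is tangent to $M$ and projects into $D_q$, which is exactly the statement that it belongs to $T^D_{\lambda(q)}M$.

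The argument is essentially a two-line chain-rule computation once the right facts are in place; there is no serious obstacle. The only thing one must be careful about is that the inclusion $X_{nh}^\lambda(q)\in D_q$ is genuinely needed and is not obvious from the definition~\eqref{eq:X-nh-lambda-r} alone — it rests on the mechanical form of the Hamiltonian and on Corollary~\ref{coro}, as developed in Remark~\ref{remark:X-nh-lambda}. So the key step to cite carefully is equation~\eqref{eq:X-nh-D-q}; everything else is formal manipulation with $\pi_Q\circ\lambda=\Id_Q$ and the fact that $\lambda$ factors through $M$.
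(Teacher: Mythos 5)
Your proposal is correct and matches the paper's own proof essentially verbatim: both verify tangency to $M$ from $\lambda(Q)\subseteq M$, and then use the chain rule with $\pi_Q\circ\lambda=\Id_Q$ together with equation~\eqref{eq:X-nh-D-q} (i.e.\ $X_{nh}^\lambda(q)=\sharp_g(\lambda(q))\in D_q$ from Remark~\ref{remark:X-nh-lambda}) to conclude membership in $T^D_{\lambda(q)}M$. Nothing further is needed.
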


\begin{proof}
	Since $\lambda (Q) \subseteq M$, we have that $(T_q\lambda)(X_{nh}^\lambda(q)) \in T_{\lambda(q)}M$.
	Moreover,
	\begin{eqnarray*}
	(T_{\lambda(q)} \restr{(\pi_Q)}{M})(T_q\lambda)(X_{nh}^\lambda(q)) &=& T_q(\restr{(\pi_Q)}{M} \circ \lambda)(X_{nh}^\lambda (q))\\
	&=& X_{nh}^\lambda (q) 
	\,.
	\end{eqnarray*}
	Thus, the result follows using equation~\eqref{eq:X-nh-D-q}.
\end{proof}

\begin{lemma}\label{lemma:HJ-generalized-nonholonomic-2}
For every $q \in Q$, we have
\begin{equation}\label{eq:Decom-T-D-M}
T^D_{\lambda(q)} M = (T_q\lambda)(D_q) \oplus V_{\lambda(q)}(\restr{\pi_Q}{M})\, .
\end{equation}
In addition, 
$$
V_{\lambda(q)} (\restr{\pi_Q}{M}) = \{(\beta_q)^V_{\lambda(q)} \; \mid \; \beta_q \in M_q = (D_q^{\perp_g})^\circ \}\, .
$$
\end{lemma}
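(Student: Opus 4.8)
The plan is to establish the direct sum decomposition by a dimension count combined with a transversality argument, and then identify the vertical piece explicitly. First I would compute dimensions: $T^D_{\lambda(q)}M$ fibers over $D_q$ under $T\restr{\pi_Q}{M}$, with kernel $V_{\lambda(q)}(\restr{\pi_Q}{M})$, the vertical tangent space to the fibration $\restr{\pi_Q}{M}\colon M\to Q$. Since $M$ is a vector subbundle of $T^\ast Q$ with fiber $M_q=(D_q^{\perp_g})^\circ$ of rank $k=\operatorname{rank}D$, that vertical space has dimension $k$; and $D_q$ has dimension $k$; so $\dim T^D_{\lambda(q)}M = 2k$. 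The two summands on the right-hand side of \eqref{eq:Decom-T-D-M} have dimensions $k$ (note $T_q\lambda$ is injective, being a section) and $k$ respectively, so it suffices to show their sum is direct and lands inside $T^D_{\lambda(q)}M$.

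For containment: any vertical vector $(\beta_q)^V_{\lambda(q)}$ with $\beta_q\in M_q$ is tangent to $M$ (since $M$ is a linear subbundle, its fibers are linear subspaces, so vertical lifts of elements of the fiber are tangent to $M$), and it projects to $0\in D_q$ under $T\pi_Q$, hence lies in $T^D_{\lambda(q)}M$. On the other hand, for $v_q\in D_q$, $(T_q\lambda)(v_q)\in T_{\lambda(q)}M$ because $\lambda$ takes values in $M$, and $T\restr{\pi_Q}{M}\big((T_q\lambda)(v_q)\big)=T_q(\restr{\pi_Q}{M}\circ\lambda)(v_q)=v_q\in D_q$, so it too lies in $T^D_{\lambda(q)}M$ — this is exactly the computation already done in Lemma~\ref{lemma:HJ-generalized-nonholonomic-1}. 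For directness: if $(T_q\lambda)(v_q)=(\beta_q)^V_{\lambda(q)}$ for some $v_q\in D_q$ and $\beta_q\in M_q$, apply $T\restr{\pi_Q}{M}$ to both sides to get $v_q=0$, whence both sides vanish; so the intersection of the two summands is trivial, and by the dimension count the sum is all of $T^D_{\lambda(q)}M$.

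The identification $V_{\lambda(q)}(\restr{\pi_Q}{M})=\{(\beta_q)^V_{\lambda(q)}\mid \beta_q\in M_q\}$ follows from the corresponding statement for $T^\ast Q$ itself — namely that the vertical space $V_{\lambda(q)}\pi_Q$ consists exactly of the vertical lifts $(\xi_q)^V_{\lambda(q)}$ with $\xi_q\in T_q^\ast Q$, via the description in Remark~\ref{dos} — intersected with $T_{\lambda(q)}M$: since $M$ is a linear subbundle, a vertical lift $(\xi_q)^V_{\lambda(q)}$ is tangent to $M$ iff $\xi_q\in M_q$. I expect the only mild subtlety to be bookkeeping the distinction between the vertical space of $\pi_Q\colon T^\ast Q\to Q$ and that of its restriction $\restr{\pi_Q}{M}\colon M\to Q$, and checking that vertical lifts of $M_q$-vectors are genuinely tangent to $M$; both are immediate from $M$ being a vector subbundle, so there is no real obstacle here — the lemma is essentially a linear-algebra unwinding of the definitions, resting on the dimension count and the injectivity of $T_q\lambda$.
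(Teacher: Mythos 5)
Your argument is correct, but it reaches the direct sum by a different mechanism than the paper. You prove that each summand is contained in $T^D_{\lambda(q)}M$, that the intersection is trivial, and then conclude spanning from the dimension count $\dim T^D_{\lambda(q)}M=2k$ versus $k+k$, which in turn rests on the injectivity of $T_q\lambda$ and on the rank bookkeeping $\dim M_q=\dim (D_q^{\perp_g})^\circ=k$ together with the surjectivity of $T_{\lambda(q)}\restr{(\pi_Q)}{M}$. The paper avoids any dimension count: for an arbitrary $Z_{\lambda(q)}\in T^D_{\lambda(q)}M$ it writes the explicit splitting $Z_{\lambda(q)}=T_q\lambda\circ T_{\lambda(q)}\pi_Q\circ Z_{\lambda(q)}+\bigl(Z_{\lambda(q)}-T_q\lambda\circ T_{\lambda(q)}\pi_Q\circ Z_{\lambda(q)}\bigr)$, notes that the first term lies in $(T_q\lambda)(D_q)$ because $T_{\lambda(q)}\pi_Q\circ Z_{\lambda(q)}\in D_q$, and that the second term is vertical and tangent to $M$; together with the trivial intersection this exhibits the decomposition constructively. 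The identification of the vertical space is the same in both arguments: $V_{\lambda(q)}(\restr{\pi_Q}{M})=V_{\lambda(q)}\pi_Q\cap T_{\lambda(q)}M$, and linearity of the subbundle $M$ forces the lifted covector to lie in $M_q$. As for what each route buys: yours is a routine transversality-plus-dimension check, at the price of the rank computations and of verifying injectivity and surjectivity of the relevant tangent maps; the paper's splitting needs none of that, produces the two components of a given vector explicitly (the pattern of testing against $(T_q\lambda)(u_q)$ with $u_q\in D_q$ and against vertical lifts is exactly what is exploited afterwards in the proof of Theorem~\ref{thm:HJ-generalized-nonholonomic}), and would survive in settings where counting dimensions is less convenient.
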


\begin{proof}
It is easy to see that
$$
(T_q\lambda)(D_q) \cap V_{\lambda(q)} (\restr{\pi_Q}{M}) = \{0\}\, .
$$
Moreover, if $Z_{\lambda(q)} \in T_{\lambda(q)}^DM$ then
$$
Z_{\lambda(q)} = T_q \lambda \circ T_{\lambda(q)}\pi_Q \circ Z_{\lambda(q)}  + \big(Z_{\lambda(q)}  -T_q \lambda \circ T_{\lambda(q)}\pi_Q \circ Z_{\lambda(q)}\big)\, .
$$
In addition, we have
$$
T_{\lambda(q)} \pi_Q\circ Z_{\lambda(q)} \in D_q\, ,
$$
which implies that
$$
T_q \lambda \circ T_{\lambda(q)}\pi_Q \circ Z_{\lambda(q)}  \in (T_q \lambda)(D_q)\, .
$$
Furthermore, it is easy to see that
$$
Z_{\lambda(q)}  -T_q \lambda \circ T_{\lambda(q)}\pi_Q \circ Z_{\lambda(q)} \in V_{\lambda(q)} (\restr{\pi_Q}{M})\, .
$$
This proves equation~\eqref{eq:Decom-T-D-M}.

On the other hand, 
$$
V_{\lambda(q)} \pi_Q = \{(\beta_q)^V_{\lambda(q)} \; \mid \; \beta_q \in T^*_qQ \}\, ,
$$
and thus
$$
V_{\lambda(q)} (\restr{\pi_Q}{M}) = V_{\lambda(q)} \pi_Q  \cap TM
= \{(\beta_q)^V_{\lambda(q)} \; \mid \; \beta_q \in M_q = (D_q^{\perp_g}) ^\circ \} \, .
$$

\end{proof}

\begin{lemma}\label{lemma:HJ-generalized-nonholonomic-3}
	If $\beta_q\in M_q=(D_q^{\perp_g})^\circ$, then
	\begin{equation}\label{cuatro}
	(i_{(T_q\lambda)(X_{nh}^\lambda (q))} \, \omega_Q(\lambda(q)))(\beta_q)^V_{\lambda(q)} =
	(i_{X_{nh}(\lambda(q))} \, \omega_Q (\lambda(q)) (\beta_q)^V_{\lambda(q)}
	\end{equation}
	for all $\beta_q \in M_q$.
\end{lemma}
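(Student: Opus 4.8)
The plan is to reduce the identity \eqref{cuatro} to the single observation that the tangent vector
\[
W \;:=\; (T_q\lambda)\big(X_{nh}^\lambda(q)\big) - X_{nh}(\lambda(q)) \;\in\; T_{\lambda(q)}(T^\ast Q)
\]
is $\pi_Q$-vertical, and then to invoke the standard relation \eqref{lift} between vertical lifts and $\omega_Q$. Indeed, \eqref{cuatro} is exactly the assertion that $\omega_Q(\lambda(q))\big(W,(\beta_q)^V_{\lambda(q)}\big)=0$ for every $\beta_q\in M_q$, so once verticality of $W$ is established the lemma will follow immediately.

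First I would check that $W$ is vertical by pushing each of its two summands forward by $T\pi_Q$. Since $\lambda$ is a $1$-form, $\pi_Q\circ\lambda=\mathrm{id}_Q$, so functoriality gives $T_{\lambda(q)}\pi_Q\big((T_q\lambda)(X_{nh}^\lambda(q))\big)=T_q(\pi_Q\circ\lambda)(X_{nh}^\lambda(q))=X_{nh}^\lambda(q)$. On the other hand, the very definition \eqref{eq:X-nh-lambda-r} of $X_{nh}^\lambda$ says $T_{\lambda(q)}(\restr{\pi_Q}{M})\big(X_{nh}(\lambda(q))\big)=X_{nh}^\lambda(q)$. Subtracting, $T_{\lambda(q)}\pi_Q(W)=0$, so by Remark~\ref{dos} there is $\eta_q\in T_q^\ast Q$ with $W=(\eta_q)^V_{\lambda(q)}$.

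It then remains to evaluate $\omega_Q$ on the pair $W$, $(\beta_q)^V_{\lambda(q)}$. Choosing a $1$-form $\beta$ on $Q$ with $\beta(q)=\beta_q$, the vertical lift $(\beta_q)^V_{\lambda(q)}$ is the value at $\lambda(q)$ of the vector field $\beta^V$, and \eqref{lift} gives $i_{\beta^V}\omega_Q=-\pi_Q^\ast\beta$; hence, using skew-symmetry of $\omega_Q$,
\[
\omega_Q(\lambda(q))\big(W,\,(\beta_q)^V_{\lambda(q)}\big)
= (\pi_Q^\ast\beta)(\lambda(q))\,(W)
= \big\langle \beta_q,\,(T_{\lambda(q)}\pi_Q)(W)\big\rangle
= 0,
\]
because $W$ is vertical. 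Rewriting the left-hand side as the difference of the two interior-product expressions yields exactly \eqref{cuatro}. (Equivalently, one may simply note that the fibre $T_q^\ast Q$ is a Lagrangian submanifold of $(T^\ast Q,\omega_Q)$, so $\omega_Q$ annihilates any pair of $\pi_Q$-vertical vectors, and both $W$ and $(\beta_q)^V_{\lambda(q)}$ are such.) I do not anticipate a genuine obstacle here: the only care needed is the bookkeeping of base points and the use of $\pi_Q\circ\lambda=\mathrm{id}_Q$, which is precisely what forces $W$ to be vertical; in fact the hypothesis $\beta_q\in M_q$ is not even used in this particular lemma.
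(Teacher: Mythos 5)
Your proposal is correct and is essentially the paper's own argument in a slightly repackaged form: the paper's chain of equalities uses exactly the same ingredients (the vertical-lift identity \eqref{lift}, skew-symmetry of $\omega_Q$, $\pi_Q\circ\lambda=\mathrm{id}_Q$, and the definition \eqref{eq:X-nh-lambda-r} of $X_{nh}^\lambda$) to show that both sides equal $\beta_q\bigl(X_{nh}^\lambda(q)\bigr)$, which is the same content as your observation that the difference vector $W$ is $\pi_Q$-vertical and hence pairs to zero with the vertical lift $(\beta_q)^V_{\lambda(q)}$. Your closing remark is also accurate: neither proof actually uses $\beta_q\in M_q$, so the identity holds for all $\beta_q\in T_q^*Q$.
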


\begin{proof}
	Indeed, using equations~\eqref{lift} and \eqref{eq:X-nh-lambda-r}, we have
	\begin{eqnarray*}
	(i_{X_{nh}(\lambda(q))} \, \omega_Q (\lambda(q)) (\beta_q)^V_{\lambda(q)} 
	&=& - (i_{(\beta_q)^V_{\lambda(q)}} \, \omega_Q(\lambda(q)))(X_{nh}(\lambda(q)) \\
	&=& (T^*_{\lambda(q)} \pi_Q)(\beta_q)(X_{nh}(\lambda(q)))\\
	&=& \beta_q((T_{\lambda(q) }\pi_Q)(X_{nh}(\lambda(q))) \\
	&=&\beta_q(X_{nh}^\lambda(q)) \\
	&=& \langle (T^*_{\beta(q)}\pi_Q)(\beta_q), (T_q\lambda)(X_{nh}^\lambda (q)) \rangle \\
	&=& - (i_{(\beta_q)^V_{\lambda(q)}} \, \omega_Q(\lambda(q))(T_q \lambda) (X_{nh}^\lambda(q)) \\
	&=& (i_{(T_q\lambda)(X_{nh}^\lambda(q))} \, \omega_Q(\lambda(q))) (\beta_q)^V_{\lambda(q)}.
	\end{eqnarray*}
\end{proof}

We can now prove the theorem above.

\begin{proof}[Proof of Theorem~\ref{thm:HJ-generalized-nonholonomic}]

For every $q \in Q$, we have
$$
(T_q\lambda) (X_{nh}^\lambda (q)) = X_{nh}(\lambda(q)) \iff
i_{(T_q\lambda)(X_{nh}^\lambda (q))} \, \omega_Q (\lambda(q)) = i_{X_{nh}(\lambda(q))} \, \omega_Q (\lambda(q))\, .
$$
Thus, using Lemmas~\ref{lemma:HJ-generalized-nonholonomic-1}, \ref{lemma:HJ-generalized-nonholonomic-2} and \ref{lemma:HJ-generalized-nonholonomic-3} and the fact that
$$
T_{\lambda(q)} (T^*Q) = T_{\lambda(q)}^D M \oplus (T_{\lambda(q)}^D M)^{\perp_{\omega_Q}}\, ,
$$
we obtain that
\begin{equation}\label{eq:Clave-formula}
\begin{aligned}
	&(T_q\lambda)(X_{nh}^\lambda(q)) = X_{nh}(\lambda(q)) \iff \\
	&(i_{T_q\lambda (X_{nh}^\lambda(q))} \, \omega_Q (\lambda(q)))(T_q\lambda)(u_q)
	= (i_{X_{nh}(\lambda(q)} \, \omega_Q (\lambda(q)))(T_q\lambda)(u_q), \hbox{ for all} \; u_q \in D_q
\end{aligned}
\end{equation}


Now, if $\theta_Q$ is the Liouville 1-form of $T^\ast Q$, then it is well-known that $\lambda^\ast \theta_Q = \theta_Q$ (see, for instance, \cite{Abraham2008}). Using this fact, we deduce that 
\begin{align}
	\left( i_{T_q\lambda (X_{nh}^\lambda(q))} \omega_Q(\lambda(q)) \right) \left( T_q\lambda (u_q) \right)
	&= -\left[\lambda^\ast (d \theta_Q)\right] (q) \left(X_{nh}^\lambda(q), u_q\right)\\
	&= -d\lambda(q) \left(X_{nh}^\lambda(q), u_q\right)
	= - \left(i_{X_{nh}^\lambda} d\lambda\right) (q) (u_q)\, .
\end{align}
Taking into account that $X_{nh}^\lambda(q) \in D_q$ (see Remark~\ref{remark:X-nh-lambda}), it follows that 
\begin{equation}\label{eq:First-formula}
	\left( i_{T_q\lambda(X_{nh}^\lambda(q))} \omega_Q(\lambda(q)) \right) \left( T_q\lambda (u_q) \right)
	= - \left(i_{X_{nh}^\lambda} d^D\lambda\right) (q) (u_q)\, .
\end{equation}
On the other hand, from equation~\eqref{eq:Int-Def-X-nh} and since $u_q\in D_q$, we obtain that
\begin{equation}\label{eq:second-formula}
	\left( i_{X_{nh}(\lambda(q))} \omega_Q(\lambda(q)) \right) \left( T_q\lambda (u_q) \right)
	= \left[d(H\circ \lambda)(q) \right] (u_q) = \left[d^D(H\circ \lambda)(q) \right] (u_q)\, .
\end{equation}
Finally, using Remark~\ref{remark:gen_HJ_alternative} and equations~\eqref{eq:Clave-formula}, \eqref{eq:First-formula} and \eqref{eq:second-formula} we deduce the result.




\end{proof}

Following the same notation as in Subsection~\ref{subsec:HJ_nh_new}, proceeding as in that subsection and using the fact that $X_{nh}^\lambda= \sharp_g(\lambda)$, we deduce that a 1-form $\lambda=\lambda_i d q^i \in \Omega^1(Q)$ satisfies the condition $\lambda(Q)\subseteq M$ and the generalized nonholonomic Hamilton--Jacobi equation~\eqref{eq:Gen-NH-HJ-Eq_gamma} if and only if
\begin{equation}
	\lambda_i = \gamma_i ^j \lambda_j, \quad \text{for all } i
\end{equation}
and
\begin{equation}\label{eq:Gen-NH-HJ-Eq_coords}
	\gamma^i_j \left( \left(\frac{\partial H}{\partial q^j}+\frac{\partial \lambda_k}{\partial q^j} \frac{\partial H}{\partial p_k}\right) 
	+g^{kl}\lambda_l \left(\frac{\partial \lambda_k}{\partial q^j} - \frac{\partial \lambda_j}{\partial q^k} \right)
	\right)=0\, .
\end{equation}


\section{Examples} \label{sec:example}
\subsection{The nonholonomic particle}

Consider a particle of unit mass be moving in space $Q = \mathbb{R}^3$, with Lagrangian
$$
L = \frac{1}{2} (\dot{x}^2+\dot{y}^2+\dot{z}^2) - V(x,y,z)\, ,
$$
and subject to the constraint
$$
\Phi = \dot{z} - y \dot{x} = 0\, .
$$
Following the previous notations, this means that the distribution $D$ is generated by
the global vector fields
$$
 e_1 = \frac{\partial}{\partial y},\quad  e_2 = \frac{\partial}{\partial x} + y \frac{\partial}{\partial z} \, .
$$
Moreover, we have
$$
D^{\perp_g} = \left\langle \frac{\partial}{\partial z} - y \frac{\partial}{\partial x} \right\rangle\, ,
$$
and
$$
D^{\circ} = \langle dz - y dx \rangle\, .
$$
Passing to the Hamiltonian side, we obtain the Hamiltonian function
$$
H(x,y,z,p_x,p_y,p_z)= \frac{1}{2}(p_x^2 +p_y^2 +p_z^2)+V(x,y,z)\, ,
$$
with constraints given by the function
$$
\Psi = p_z - y p_x = 0\, .
$$
We have an orthogonal decomposition
$$
T^*Q = M \oplus D^{\circ}\, ,
$$
where a simple computation shows that
$$
M = \langle dy, dx + y dz \rangle\, .
$$
Thus, we can take global coordinates $(x, y, z, \pi_1, \pi_2)$ on $M$, and using equation~\eqref{localexpression-E-bracket} we obtain the following equations for the Eden bracket:
\begin{eqnarray*}
\{x, \pi_2\}_E &=& - \{\pi_2, x\}_ E = 1\, ,\\
\{y, \pi_1\}_E &=& - \{\pi_1, y\}_ E = 1\, ,\\
\{z, \pi_2\}_E &=& - \{\pi_2, z\}_ E = y\, ,
\end{eqnarray*}
the rest of the brackets between the coordinates being zero.

Next, a straightforward calculation shows that
\begin{align}
	&g= \begin{pmatrix}
    1 & 0 & 0\\
    0 & 1 & 0\\
    0 & 0 & 1
	\end{pmatrix}\, , \\\\
	&C^{-1} = \begin{pmatrix}
		1 & 0 \\
		0& \frac{1}{1 + y^2}
	\end{pmatrix}\, ,\\\\
	&\mathcal{E} = 
	\begin{pmatrix}
		\frac{1}{1+y^2} & 0 & \frac{y}{1+y^2}\\
		0 & 1 & 0\\
    	\frac{y}{(1+y^2} & 0 & \frac{y^2}{1+y^2}
	\end{pmatrix}\, ,
\end{align}
and $\gamma \equiv \mathcal{E}$.

Hence, if $p_x dx + p_y dy + p_z dz$ is a 1-form on $\mathbb{R}^3$ then
$\tilde{p}_x dx + \tilde{p}_y dy + \tilde{p}_z dz = \gamma (p_x dx + p_y dy + p_z dz) \in \Gamma(M)$, and we have that
\begin{eqnarray*}
\tilde{p}_ x & = & \frac{1}{1+y^2}p_x + \frac{y}{1+y^2}p_z\, , \\
\tilde{p}_y & = & p_y\, ,\\
\tilde{p}_z &=& \frac{y}{1+y^2} p_x  + \frac{y^2}{1+y^2} p_z\, .
\end{eqnarray*}
Let $\lambda\in \Omega^1(Q)$ be a solution of the Hamilton--Jacobi equation~\eqref{eq:HJ_nh}. The condition $\gamma \circ \lambda = \lambda$ implies that $\lambda$ is of the form
$$\lambda = \lambda_x \dd x + \lambda_y \dd y + y \lambda_x \dd z\, .$$
On the other hand, the condition $\restr{\dd \lambda}{D\times D} = 0$ holds if and only if
$$\dd \lambda(e_1, e_2) = 0\, ,$$
or, equivalently,
$$\left(1+y^2\right) \frac{\partial \gamma_x}{\partial y}+y \gamma_x-\frac{\partial \gamma_y}{\partial x}-y \frac{\partial \gamma_y}{\partial z}=0\, .$$
The Hamilton--Jacobi equation $\gamma \circ \dd(H\circ \lambda) = 0$ yields
\begin{align}
	& \lambda_x \frac{\partial \lambda_x}{\partial x}+\lambda_y \frac{\partial \lambda_y}{\partial x}+y^2 \lambda_x \frac{\partial  \lambda_x}{\partial x}+y\left(\lambda_x \frac{\partial \lambda_x}{\partial z}+\lambda_y \frac{\partial \lambda_y}{\partial z}+y^2 \lambda_x \frac{\partial \lambda_x}{\partial z}\right)+\frac{\partial V}{\partial x}+y \frac{\partial V}{\partial z}=0\, ,\\
	& \lambda_x \frac{\partial \lambda_x}{\partial y}+\lambda_y \frac{\partial \lambda_y}{\partial y}+y^2 \lambda_x \frac{\partial  \lambda_x}{\partial y}+y \lambda_x^2 +\frac{\partial V}{\partial y}=0\, .
\end{align}
These equations coincide with those obtained in Example 6.1 from \cite{deLeon2014}.

In particular, if the Hamiltonian is purely kinetical (i.e.~$V=0$), then a solution for the Hamilton--Jacobi equation is given by
$$\lambda=\frac{\mu}{\sqrt{1+y^2}} \dd x \pm \sqrt{2 E-\mu^2} \dd y+\frac{\mu y}{\sqrt{1+y^2}} \dd z\, ,$$
for some constants $E$ and $\mu$ (see Example 5.3.1 in \cite{Colombo2022}).

\subsection{The rolling ball}
Consider a sphere of radius $r$ and mass $1$ which rolls without sliding on a horizontal plane. The configuration space is $Q=\RR^2 \times \mathrm{SO}(3)$.
The Lagrangian function $L\colon TQ\to \RR$ is given by
$$L=\frac{1}{2}m\left(\dot{x}^2+\dot{y}^2\right) + \frac{1}{2} \langle \mathbb{I} \omega, \omega \rangle\,, $$
where $\omega=(\omega_1, \omega_2, \omega_3)$ denotes the angular velocity of the ball, and $\mathbb{I}$ the moment of inertia of the sphere with respect to its center of mass. Assume that the sphere is homogeneous. Then, $\mathbb{I}=\mathrm{diag}(I, I, I)$.

The ball rotates without sliding, i.e.~its subject to the nonholonomic constraints
$$\dot x = r \omega_2, \quad \dot y = -r \omega_1\,. $$
Let $X_1^R, X_2^R, X_3^R$ denote the standard basis of right-invariant vector fields on $\mathrm{SO}(3)$. Let $\rho_1, \rho_2, \rho_3$ be the right Maurer--Cartan 1-forms, which form a basis of $T^\ast \mathrm{SO}(3)$ dual to $\{X_1^R, X_2^R, X_3^R\}$. Then, the constraint 1-forms are 
$$\mu^1 = \dd x - r\rho_2,\qquad 
\mu^2 = \dd y + r \rho_1\, , $$
which span $D^\circ$. 
Hence, $\Gamma(D)=\langle e_a, e_b, e_c\rangle$ and $\Gamma(D^\perp)=\langle e_\alpha, e_\beta\rangle$, where
$$ e_a 	= \frac{\partial}{\partial x}+\frac{1}{r}X_2^R,\quad e_b = \frac{\partial}{\partial y}-\frac{1}{r}X_1^R,\quad e_c =  X_3^R, \quad e_\alpha = I \frac{\partial}{\partial x} - mr X_2^R, \quad e_\beta = I \frac{\partial}{\partial y} + mr X_1^R\, .$$
Their brackets are given by
$$ [e_a, e_b] = -\frac{1}{r^2} e_b,\quad [e_a, e_c] = \frac{I}{I+mr^2} e_b - \frac{1}{I+mr^2} e_\beta, \quad
[e_b,e_c]=-\frac{I}{I+mr^2} e_a - \frac{1}{I+mr^2} e_\alpha \, .$$
From the orthogonal basis $\{e_a, e_b, e_c, e_\alpha, e_\beta\}$ and using the Euler angles $( \theta, \varphi, \psi)$ as coordinates of $\mathrm{SO(3)}$, we have local coordinates $(x,y, \theta, \varphi, \psi, v^a, v^b, v^c, v^\alpha, v^\beta)$ in $TQ$, where
$$\dot x = v^a + I v^\alpha, \quad \dot y = v^b + I v^\beta, \quad \omega_1 = - \frac{v^b}{r} + mr v^\beta, \quad \omega_2 = \frac{v^a}{r} - mr v^\alpha, \quad \omega_3= v^c\, .$$ 
In these new coordinates, the Lagrangian function is given by
$$ L = \frac{1}{2}m\left[(v^a + I v^\alpha)^2+(v^b + I v^\beta)^2\right] 
	+ \frac{I}{2} \left[ \left(- \frac{v^b}{r} + mr v^\beta\right)^2 + \left(\frac{v^a}{r} - mr v^\alpha\right)^2 +\left(v^c\right)^2  \right]\,. $$
Since it is purely kinetical, the Lagrangian energy coincides with the Lagrangian function, namely, $E_L=L$.
The constraint submanifold $D\subseteq TQ$ is given by
$$D=\left\{ (x,y, \theta, \varphi, \psi, v^a, v^b, v^c, v^\alpha, v^\beta) \mid v^\alpha=v^\beta=0\right\}\, ,$$
so the canonical inclusion $i_D\colon D \hookrightarrow TQ$ is 
$$i_D\colon (x,y, \theta, \varphi, \psi, v^a, v^b, v^c)\mapsto (x,y, \theta, \varphi, \psi, v^a, v^b, v^c, 0,0)\, .$$
The constrained Lagrangian function is
$$ L \circ i_D =  \frac{1}{2}\left(m+ \frac{I}{r^2}\right) \left[(v^a) ^2+(v^b) ^2\right] 
+ \frac{I}{2} \left(v^c\right)^2 \,. $$
The Legendre transformation and its inverse are given by
\begin{align}
	FL&\colon
	\left(x, y, \theta, \varphi, \psi, v^a, v^b, v^c, v^\alpha, v^\beta\right) \\ 
	\\& \mapsto
	\left(x, y, \theta, \varphi, \psi, \frac{I+mr^2}{r^2}v^a, \frac{I+mr^2}{r^2}v^b, Iv^c, Im (I+mr^2) v^\alpha, Im (I+mr^2)  v^\beta\right)\, ,
\end{align}
and 
\begin{align}
	FL^{-1}&\colon
	\left(x, y, \theta, \varphi, \psi, p_a, p_b, p_c, p_\alpha, p_\beta\right) \\ 
	\\& \mapsto
	\left(x, y, \theta, \varphi, \psi, \frac{r^2}{I+mr^2}p_a, \frac{r^2}{I+mr^2}p_b, \frac{p_c}{I}, \frac{p_\alpha}{Im (I+mr^2)}, \frac{p_\beta}{Im (I+mr^2)}\right)\, ,
\end{align}
respectively.
The Hamiltonian function $H\colon T^\ast Q \to \RR$ is 
$$H = E_L \circ FL^{-1}
= \frac{r^2 p_a^2}{2 \left(I+m r^2\right)}+\frac{r^2 p_b^2}{2 \left(I+m r^2\right)}+\frac{p_c^2}{2 I}
+ \frac{p_\alpha^2}{2Im(I+mr^2)}+\frac{p_\beta^2}{2Im(I+mr^2)}
\, .$$
The constraint submanifold $M=(D^{\perp_g})^\circ \subseteq T^\ast Q$ is given by
$$M = \left\{\left(x, y, \theta, \varphi, \psi, p_a, p_b, p_c, p_\alpha, p_\beta\right) \mid p_\alpha=p_\beta=0\right\}\, ,$$
so the canonical inclusion $i_M\colon M\hookrightarrow T^\ast Q$  is
$$i_M \colon \left(x, y, \theta, \varphi, \psi, p_a, p_b, p_c\right) \mapsto \left(x, y, \theta, \varphi, \psi, p_a, p_b, p_c, 0, 0\right)\, .$$
Thus, the constrained Hamiltonian function is
$$ H \circ i_M = \frac{r^2 p_a^2}{2 \left(I+m r^2\right)}+\frac{r^2 p_b^2}{2 \left(I+m r^2\right)}+\frac{p_c^2}{2 I}\, .$$
Let $\{\mu^a, \mu^b, \mu^c, \mu^\alpha, \mu^\beta\}$ denote the dual basis of $\{e_a, e_b, e_c, e_\alpha, e_\beta\}$ . Then,
\begin{align}
	&\mu^a = \frac{r}{I+mr^2} (mr \dd x + I \rho_2),\quad \mu^b = \frac{r}{I+mr^2} (mr \dd y - I \rho_1), \quad \mu^c = \rho_3,\\
	& \mu^\alpha = \frac{1}{I+mr^2}(\dd x -r \rho_2), \quad \mu^\beta= \frac{1}{I+mr^2}(d y + r \rho_1) \, .
\end{align} 
The constrained Legendre transformation $F(L\circ i_D)\colon D \to M$ is
$$
	F(L\circ i_D) \colon
	\left(x, y, \theta, \varphi, \psi, v^a, v^b, v^c\right)  \mapsto
	\left(x, y, \theta, \varphi, \psi, \frac{I+mr^2}{r^2}v^a, \frac{I+mr^2}{r^2}v^b, Iv^c\right)\, ,
$$
and its inverse is
$$ 
	F(L\circ i_D)^{-1}\colon
	\left(x, y, \theta, \varphi, \psi, p_a, p_b, p_c\right) \mapsto
	\left(x, y, \theta, \varphi, \psi, \frac{r^2}{I+mr^2}p_a, \frac{r^2}{I+mr^2}p_b, \frac{p_c}{I}\right)\, .
$$

Let us now look for a solution $\lambda\in \Omega^{1}(Q)$ for the generalized nonholonomic Hamilton--Jacobi equation~\eqref{eq:Gen-NH-HJ-Eq}. The condition $\lambda(Q)\subseteq M$ implies that $\lambda$ is of the form
$$ \lambda = \lambda_a\, \mu^a + \lambda_b\, \mu^b + \lambda_c\, \mu^c\, ,$$
for some functions $\lambda_a, \lambda_b, \lambda_c \colon Q \to \RR$. Then, $\lambda$ is a solution of the generalized nonholonomic Hamilton--Jacobi equation if and only if
$$
d^D (H \circ \lambda) + i_{X_{nh}^\lambda} \, d^D \lambda = 0\, ,
$$
where $d^D$ denotes the pseudo-differential of the skew-symmetric algebroid $D$.
For simplicity's sake, assume that $d^D (H \circ \lambda)=0$ and $i_{X_{nh}^\lambda} \, d^D \lambda=0$.
We have that
$$ d^D (H \circ \lambda) = \frac{r^2 \lambda_a }{I+m r^2} \dd \lambda_a +\frac{r^2\lambda_b}{I+m r^2} \dd \lambda_b +\frac{\lambda_c}{I} \dd \lambda_c\, ,$$
which vanishes if $\lambda_a=c_a,\, \lambda_b=c_b$ and $\lambda_c = c_c$ for some constants $c_a,c_b,c_c\in \RR$. Then, we have that
\begin{align}
	& d\lambda\left(e_a, e_b\right)=-\lambda\left[e_a, e_b\right] = \frac{c_c}{r^2}\, ,\\
	& d\lambda\left(e_a, e_c\right)=-\lambda\left[e_a, e_c\right] = - \frac{Ic_b}{I+mr^2}\, , \\
	& d\lambda\left(e_b, e_c\right)=-\lambda\left[e_b, e_c\right] = \frac{Ic_a}{I+mr^2}\, ,
\end{align}
and thus
\begin{equation}\label{eq:dDlambda_coords}
	d^D \lambda=\frac{c_c}{r^2} \mu^a \wedge \mu^b-\frac{I c_b}{I+m r^2} \mu^a \wedge \mu^c+\frac{I c_a}{I+m r^2} \mu^b \wedge \mu^c \, .
\end{equation}
On the other hand, we have that
$$X_{nh}^\lambda(q) = \sharp_g (\lambda(q))\, ,$$
for each $q\in Q$, where $\sharp_g\colon T_q^\ast Q \to T_qQ$ denotes the isomorphism defined by the Riemannian metric $g$. Hence,
\begin{equation}\label{eq:Xnhlambda_coords}
	X_{n h}^\lambda=\frac{c_a r^2}{I+m r^2} e_a+\frac{c_b r^2}{I+m r^2} e_b+\frac{c_c}{I} e_c \, .
\end{equation}
Making use of the local expressions \eqref{eq:dDlambda_coords} and \eqref{eq:Xnhlambda_coords}, we obtain that
$$i_{X_{nh}^\lambda} d^D \lambda = 0\, ,$$
and conclude that $\lambda$ is a solution of the generalized nonholonomic Hamilton--Jacobi equation.

It is worth remarking that from this solution one can obtain 3 independent first integrals of the nonholonomic dynamics. 
Indeed, 
the map $\psi\colon Q \times \RR^3 \to M$ given by
$$\psi\colon (q, c_a, c_b, c_c) \mapsto c_a \mu^a(q) + c_b \mu^b(q) + c_c \mu^c(q)$$
is a global trivialization of $M$. Its inverse is given by 
$$\psi^{-1}\colon M\ni (q, p_a, p_b, p_c) \mapsto (q, p_a, p_b, p_c) \in Q\times \RR^3
\, .$$
Define the functions $f_a, f_b, f_c\colon M \to \RR$ given by 
$$f_a=p_a,\quad f_b=p_b,\quad f_c=p_c\, .$$  
Using equations~\eqref{Local-Poisson-bracket} and \eqref{lift}, we have
$$\left\{f_a, f_b\right\}_E=\frac{f_c}{r^2},\quad
\left\{f_c, f_a\right\}_E=\frac{I}{I+m r^2} f_b,\quad
\left\{f_b, f_c\right\}_E=\frac{I}{I+m r^2} f_a\, ,$$
so
$$\left\{f_a, H \circ i_M\right\}_E=\left\{f_b, H_{\circ} i_M\right\}_E=\left\{f_c, H \circ i_M\right\}_E=0\, ,$$
and thus $f_a, f_b$ and $f_c$ are first integrals of the nonholonomic dynamics.

Translating these first integrals to the Lagrangian formalism, we obtain that the functions
$$
	f_a \circ F(L\circ i_D) =  \frac{I+mr^2}{r^2}v^a,\quad  f_b\circ  F(L\circ i_D) = \frac{I+mr^2}{r^2}v^b,\quad f_c\circ  F(L\circ i_D)=Iv^c
$$
are first integrals for the nonholonomic Lagrangian dynamics. Hence, $v^a, v^b$ and $v^c$ are also first integrals for the nonholonomic Lagrangian dynamics. Therefore, $\omega_1,\, \omega_2$ and $\omega_3$ are first integrals as well. As a matter of fact, they coincide with the first integrals obtained in \cite[pp.~194-198]{Neimark2004}.

\section{Conclusions and future work}\label{sec:conclusions}

We have presented the concept of the Eden bracket and contrasted it with other nonholonomic brackets. It is noteworthy that there exist almost Poisson isomorphisms among the three nonholonomic mechanics formulations. Hence, one can make use of the formulation that is more convenient for each problem, and translate it to the other formulations via these isomorphisms.

The use of this new description of the nonholonomic bracket following Eden's ideas opens up many possibilities to simplify some developments in nonholonomic mechanics, including the following:


\begin{itemize}

\item We are going to study the quantization of nonholonomic systems \cite{deLeonc}. More specifically, following the original ideas by Eden \cite{Eden1951b}, we plan to study what is the quantum counterpart of a mechanical system with nonholonomic constraints.


\item We would also like to discuss the connection between complete solution of the generalized nonholonomic Hamilton--Jacobi equation, complete systems of first integrals of the nonholonomic system and symmetries of the system. In addition, the Eden bracket could be used to study of the reduction by symmetries and define a new version of the nonholonomic momentum map.

\item Moreover, we plan to construct a discrete version of the operator $\gamma$ in order to develop geometric integrators for nonholonomic mechanical systems.

\end{itemize}

\subsection*{Data availability}
Data sharing is not applicable to this article as no new data were created or
analyzed in this study.

\subsection*{Declaration of interest}
The authors have no competing interests to declare.

\section*{Acknowledgements}
The authors acknowledge financial support from Grant RED2022-134301-T funded by MCIN/AEI/ 10.13039/501100011033.
M.~de León, M.~Lainz and A.~López-Gordón also acknowledge Grants PID2019-106715GB-C21, PID2022-137909NB-C21 and CEX2019-000904-S funded by MCIN/AEI/ 10.13039/501100011033. Asier L\'opez-Gord\'on would also like to thank MCIN for the predoctoral contract PRE2020-093814. J.~C.~Marrero ackowledges financial support from the Spanish Ministry of Science and Innovation and European Union (Feder) Grant PID2022-137909NB-C22. He also thanks L.~García-Naranjo for his useful comments on first integrals of the rolling ball. All the authors would like to express their appreciation to the referee for their valuable feedback and constructive comments, which greatly improved the clarity of this paper.

\let\emph\oldemph


\printbibliography

\end{document}


		\begin{center}
		\begin{tikzcd}
    & T^\ast Q \arrow[dd, "\mu^A"] \\
D \arrow[rd, "\bar{\mu}^A"] \arrow[ru, "i_{D}", hook] &                              \\
    & \mathbb{R}                  
\end{tikzcd}
\end{center}

	\begin{corollary}
	We have
	$$
	{\cal P}(X_{H \circ \gamma}) = X_{nh}
	$$
	\end{corollary}
	
	\begin{proof}
	Indeed, for any function $f \in C^\infty(M)$, we have
	\begin{eqnarray*}
	X_{nh}(f) &=& \dot{f} \\
	&=& \{f, H_M\}_E \\
	&=& \{f \circ \gamma, H \circ \gamma\}\\
	&=& \omega_Q(X_{f \circ \gamma}, X_{H \circ \gamma}\} \\
	&=& omega_Q(X_{f \circ \gamma}, {\cal P}(X_{H \circ \gamma})\} \\
	&=& (i_{X_{f\circ \gamma}} \omega_Q) {\cal P}(X_{H \circ \gamma})\\
	&=& {\cal P}(X_{H \circ \gamma}) (f)
	\end{eqnarray*}
	since $f \circ \gamma^2 = f \circ \gamma$.
	\end{proof}

	We can consider an adapted orthonormal basis with respect to the above decomposition, that is, a local basis
	$\{e_i\} = \{e_a, e_A\}$
	of vector fields on $Q$ such that $\{e_a\}$ is a local basis of $D$ and $\{e_A\}$ is a local basis of $D^{\perp_g}$; in addition, we have
	\begin{eqnarray*}
	 && g(e_a, e_b) = \delta_{ab}\\
	 && g(e_a, e_B) = 0\\
	 && g(e_A, e_B) = \delta_{AB}
	\end{eqnarray*}